\let\csname equation*\endcsname\relax
\let\csname endequation*\endcsname\relax
\newtheorem{theorem}{Theorem}
\newtheorem{corollary}{Corollary}[theorem]
\newtheorem{lemma}[theorem]{Lemma}
\newtheorem{proposition}[theorem]{Proposition}
\newtheorem{definition}{Definition}
\newcommand{\Mod}[1]{\ (\mathrm{mod}\ #1)}
\newcommand\mtiny[1]{\mbox{\tiny\ensuremath{#1}}}
\begin{document}

\title{Nonlocality of Observable Algebras in Quasi-Hermitian Quantum Theory}

\author{Jacob Barnett}

\address{Perimeter Institute for Theoretical Physics,\\
31 Caroline Street North, Waterloo, Ontario N2J 2Y5, Canada}
\ead{jbarnett@perimeterinstitute.ca}
\vspace{10pt}
\date{\today}

\begin{abstract}

Explicit construction of local observable algebras in quasi-Hermitian quantum theories is derived in both the tensor product model of locality and in models of free fermions.
The latter construction is applied to several cases of a $\mathcal{PT}$-symmetric toy model of particle-conserving free fermions on a 1-dimensional lattice, with nearest neighbour interactions and open boundary conditions. Despite the locality of the Hamiltonian, local observables do not exist in generic collections of sites in the lattice. The collections of sites which do contain nontrivial observables strongly depends on the complex potential.
\end{abstract}

\section{Introduction}

While sufficient, Hermiticity in a fixed Hilbert space is not necessary to ensure observables of a quantum theory have real eigenvalues, nor is it necessary for unitarity. In finite Hilbert spaces, the necessary and sufficient condition for real spectra \cite{QuasiHerm92, bender2010pt, williams1969operators, mostafazadeh2002pseudo2, mostafazadeh2002pseudo3} as well as unitarity \cite{QuasiHerm92, mannheim2013pt, bender2002complex} is \textit{quasi-Hermiticity}, where the observables are required to satisfy
\begin{equation}
\eta O = O^\dag \eta \label{operator},
\end{equation} 
for some Hermitian, positive definite operator $\eta$, referred to as the \textit{metric}. 
The postulate of time evolution remains unchanged, dictated by Schr\"odinger evolution with a quasi-Hermitian Hamiltonian. Unitarity and expectation values are defined with respect to the physical inner product,
\begin{align}
\braket{\psi|\phi}_\eta &= \braket{\psi|\eta|\phi}, \label{PhysInnProd} \\ \braket{O}_\eta &= \frac{\braket{\psi|\eta O|\psi}}{\braket{\psi|\eta|\psi}}.
\end{align}

Quasi-Hermitian quantum theory is often claimed to be a genuine extension of quantum theory. However, the theory mentioned above is simply quantum theory where the physical Hilbert space inner product and adjoint are defined through $\braket{\cdot|\cdot}_\eta$. In addition, motivated by the well known theorem that any two separable Hilbert spaces of the same dimension are isomorphic, there is an equivalent theory expressed in terms of the inner product $\braket{\cdot|\cdot}$ which can be constructed through a similarity transform \cite{mosta2003equivalence,kretschmer2001interpretation,kretschmer2004quasi}.

Despite the mathematical equivalence of these two pictures of quantum theory, we emphasize that \textit{local} quantum theory is generalized by using a quasi-Hermitian representation. More generally, the usage of quasi-Hermitian representations plays a role when there exists an additional physical significance to an inner product structure aside from a role in computing expectation values. 

To elaborate, consider the tensor product model of locality, which is defined on a Hilbert space with a tensor product factorization, $H \simeq H_A \otimes H_B$. A common choice for the physical inner product is one which factorizes with $H_A \otimes H_B$. This assumption is unnecessarily restrictive, since a general inner product is deduced from a metric operator which may not factorize in the form $\eta = \eta_A \otimes \eta_B$.

Operators local to subsystem $A$ in this model are defined as those which decompose as a tensor product with the identity operator on subsystem $B$,
\begin{align}
O = O_A \otimes \mathbbm{1}_B, \label{locality}
\end{align}
and vice versa for local operators in subsystem $B$. Motivating this definition is the observation that expectation values of local observables can be computed from the local state, a partial trace of the density matrix \cite{densityOperator}, without referencing the entire Hilbert space,
\begin{align}
\rho_A &= \text{Tr}_B\, \frac{\ket{\psi} \bra{\psi} \eta}{\braket{\psi|\eta|\psi}}, \\
\braket{O}_\eta &= \text{Tr}_A\, \rho_A O_A.
\end{align}
In addition, observables in disjoint subsystems can be simultaneously measured without affecting each other, a result known as no-signalling \cite{noSignalling,normieNonLocality}.

Note the definition of a local operator is independent of the inner product structure. Thus, quasi-Hermitian theories with differing metric operators can contain distinct local observables algebras, even with distinct, possibly vanishing dimensions. Generically, while the aforementioned similarity transform maps a quasi-Hermitian theory to a Hermitian theory, it does not preserve the notion of local observable algebras. This is a consequence of the \textit{nonlocality} of the similarity transform, demonstrated for instance in \cite{freeFermionMetric,jin2009solutions, mostafazadeh2005anharmonic}. In fact, the space of local quasi-Hermitian models similar to local Hermitian models was found in \cite{MetricTensorStructure}, and is smaller than the total space of local quasi-Hermitian models.

The use of quasi-Hermitian models allows for the discovery and exploration of a broader set of local quantum theories, despite their mathematical equivalence to nonlocal Hermitian theories. This idea could prove useful for finding new quantum field theories, where the space of theories is heavily constrained by principles such as gauge symmetry and renormalizability. In the context of non-commutative theories with potential nonlocal features, quasi-Hermitian representations of the algebra of observables may be more natural than Hermitian descriptions, as exemplified in \cite{Fring2010}. In addition, quasi-Hermiticity has promise for problems in quantum gravity, where the evolution is expected to be local, but, due to diffeomorphism invariance, observables are nonlocal.

The goal of this work is to explicitly construct local observable algebras in quasi-Hermitian theories in general, to discuss their properties, and derive their existence in several toy models.

As the observables in a subsystem $A$ could be further localized to a subsystem contained in $A$, we introduce the notion of an \text{extensively} local observable: An observable is \textit{extensively} local over subsystem $A$ if this observable is not local to any subsystem of $A$. Note that if two subsystems $A, B$ have extensively local observables, then their union does as well, proven by taking a suitable linear combination of observables local to $A$ and $B$.

\section{Outline of Results}

We review key features of quasi-Hermitian quantum theories, simple models of locality, and our toy model of lattice fermions in \cref{review}.

In \cref{GeneralTheorems}, we analyse local observable algebras when locality is defined by a tensor product structure. The existence of local observables is tied to the Schmidt decomposition of the metric in \cref{block metrics}. Simple examples of observable algebras for metric operators with nontrivial Schmidt decomposition are presented. Numerical application of this theorem to a generic many body problem in practice requires some care, as the number of matrix elements of the metric scales exponentially with lattice size.

Motivated by computational simplicity, we analyse free fermions in \cref{free fermion locality}. Locality is defined directly from the canonical anti-commutation relations rather than through a tensor product structure, as reviewed in \cref{fermion locality}. The most general local observables are computed via the kernel of blocks of a first quantized metric in 
\cref{polynomial theorem}.

Our general results for free fermions are subsequently applied to a toy model, a local\footnote{The definition of a local Hamiltonian is different from that of a local observable, and is reviewed in \cref{local Ham}.}, one-dimensional, 
many-body Hamiltonian of free fermions, a tight-binding model with a complex on-site potential and anisotropic hopping amplitudes,
\begin{equation}
H_{\mathcal{PT}} = \left(\gamma a^{\dag}_m a^{}_m + \gamma^{*} a^{\dag}_{\overline{m}} a^{}_{\overline{m}} \right) + \sum^n_{i = 1} \left(V_i a^\dag_i a^{}_i\right) + \sum^{n-1}_{i = 1} \left( t^{*}_{n-i} a^{\dag}_i a^{}_{i+1} + t_i a^{\dag}_{i+1} a^{}_i \right),
\end{equation}
where $V_i = V_{n-i} \in \mathbbm{R}$, $\overline{m} = n-m+1$, and $\text{arg} \, t_{n-i} = \text{arg} \, t_i$.
This toy model is symmetric under combined parity and time-reversal symmetries, $\mathcal{PT}$,
and is known to be quasi-Hermitian in some special cases \cite{freeFermionMetric, InfiniteLattice, Babbey, jin2009solutions, MyFirstPaper, ZnojilModel, ZnojilGeneralized, JoglekarSaxena, Willms2008, PTModels}. The $ \left(\gamma a^{\dag}_m a^{}_m + \gamma^{*} a^{\dag}_{\overline{m}} a^{}_{\overline{m}} \right)$ terms are referred to as impurities.

Surprisingly, even though $H_{\mathcal{PT}}$ is local, there generically exist subsystems with no local observables. The locality profile appears to depend strongly on the form of the potential. 
When the non-Hermitian impurities are closest to each other and $\gamma \not\in \mathbbm{R}$, nontrivial observables exist if and only if the subsystem containing them is parity symmetric.

When the impurities are farthest from each other, with $m= 1$, the existence of local observables depends on certain connectivity properties of the subsystem, outlined and proven in the various propositions of \cref{ApplicationOfTheorems}. For a precise statement of these connectivity properties see lemma~\eqref{conds} and proposition~\eqref{UnitDiskLocality}. Interestingly, in the case where the impurities are farthest from each other, the space of subsystems with local observables is broader in the special case $|\gamma| = |t|$, where $\gamma$ is the non-Hermitian potential and $t$ is the non-Hermitian hopping amplitude. We note this may be connected to a special spectral property of the Hamiltonian: when $\gamma = e^{i \theta}$, only one eigenvalue of $H$ depends on $\theta$ \cite{Willms2008}.

If the metric is $\mathcal{PT}$-symmetric, and subsystem $A$ contains local observables, then the $\mathcal{P}$ dual of $A$ also contains local observables, as discussed in \cref{Symmetric Ham Locality},



\section{$\mathcal{PT}$-symmetric free fermions} \label{review}
\subsection{The structure of quasi-Hermitian theory} \label{fundamentals}

A brief review of some foundational results in quasi-Hermitian theory in finite Hilbert spaces is provided in this section. The interested reader is referred to two review articles for additional depth, \cite{MakingSenseNonHerm,MostaReview}.

The aim of this section is to prove a spectral theorem for quasi-Hermitian operators, emphasized in \cite{williams1969operators,mosta2003equivalence,mostafazadeh2008metric}, and to discuss properties of quasi-Hermitian theories through its proof.
\begin{theorem}  \label{diagonalizable}
An operator, $O$, in a finite-dimensional Hilbert space $\mathcal{H}$ is diagonalizable with a real spectrum if and only if $O$ is quasi-Hermitian with respect to some Hermitian positive-definite metric, $\eta$.
\end{theorem}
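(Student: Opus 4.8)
The plan is to prove the two implications separately, in each case reducing the statement to the ordinary spectral theorem for Hermitian operators by means of a similarity transform built either from $\eta^{1/2}$ or from the eigenbasis of $O$.

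For the easier direction, I would assume $O$ is quasi-Hermitian, so that $\eta O = O^\dag \eta$ for some Hermitian positive-definite $\eta$. Since $\eta$ is positive-definite it admits a unique positive-definite Hermitian square root $\eta^{1/2}$ with Hermitian inverse $\eta^{-1/2}$, which I would justify by applying the spectral theorem to $\eta$ itself. Defining $h = \eta^{1/2} O \eta^{-1/2}$, a one-line computation using $O^\dag = \eta O \eta^{-1}$ gives $h^\dag = \eta^{-1/2} O^\dag \eta^{1/2} = \eta^{1/2} O \eta^{-1/2} = h$, so $h$ is Hermitian. By the spectral theorem $h$ is diagonalizable with real spectrum, and since $O = \eta^{-1/2} h \eta^{1/2}$ is similar to $h$, it inherits both properties.

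For the converse, I would assume $O$ is diagonalizable with real spectrum. Diagonalizability supplies an invertible $S$ whose columns are eigenvectors, with $O = S D S^{-1}$, and reality of the spectrum makes $D$ a real, hence Hermitian, diagonal matrix. I would then propose the metric $\eta = (S^\dag)^{-1} S^{-1} = (S S^\dag)^{-1}$, which is manifestly Hermitian and positive-definite. Verifying the intertwining relation is a short calculation: both $\eta O$ and $O^\dag \eta$ collapse to $(S^\dag)^{-1} D S^{-1}$ after using $D^\dag = D$ and $S^\dag (S^\dag)^{-1} = \mathbbm{1}$.

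The computations themselves are routine; the conceptual care lies in recognizing that both hypotheses are indispensable and in guaranteeing positive-definiteness rather than merely producing an intertwiner. If the spectrum were not real, $D$ would fail to be Hermitian and the constructed $\eta$ would no longer relate $O$ to $O^\dag$; if $O$ were not diagonalizable, no invertible $S$ with diagonal $D$ exists, and a nontrivial Jordan block cannot be similar to a Hermitian matrix. The main obstacle I anticipate is precisely the forward direction's positivity requirement: a generic Hermitian solution of $\eta O = O^\dag \eta$ need not be positive-definite, so I would emphasize that the specific choice $(S S^\dag)^{-1}$ — equivalently, a sum of projectors onto a biorthogonal eigenbasis with strictly positive weights — is what secures positivity, and I would remark on the resulting non-uniqueness of the metric.
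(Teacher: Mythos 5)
Your proposal is correct and follows essentially the same route as the paper: the converse direction builds the metric $\eta = (SS^\dag)^{-1}$ from the eigenbasis (the paper's construction $\eta^{-1} = \mathcal{U} d\, \mathcal{U}^\dag$ specialized to $d = \mathbbm{1}$, with your closing remark on non-uniqueness corresponding to the paper's general $d$ commuting with $D$), and the forward direction is exactly the paper's square-root similarity transform $h = \eta^{1/2} O \eta^{-1/2}$ reducing to the Hermitian spectral theorem.
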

Note this theorem is more powerful than the textbook claim that Hermiticity in a fixed Hilbert space is sufficient for a real spectrum, as quasi-Hermiticity is the sufficient and necessary condition for a real spectrum.

\begin{proof}
This proof begins by constructing a metric operator, $\eta$, for every diagonalizable operator with real spectrum, $O$.

Consider an orthonormal basis under the inner product given through $\mathcal{H}$. $O$ is diagonalizable in this basis, and can be expressed in the form $O = \mathcal{U} D \mathcal{U}^{-1}$, where the diagonal matrix $D$ is Hermitian. The most general metric operator associated to $O$ is thus
\begin{align}
\eta^{-1} = \mathcal{U} d \mathcal{U}^\dag, \label{generalMetric}
\end{align}
where $d$ is a Hermitian, positive-definite matrix which commutes with $D$.


The remaining direction of the proof can be performed in multiple ways. The author chooses to follow one method performed initially in \cite{williams1969operators}. Assume the existence of a metric operator, $\eta$. Since $\eta$ is positive definite, its square root exists, is Hermitian, and is invertible
\begin{equation}
\eta = \Omega^2, \,\,\, \Omega = \Omega^{\dag}. \label{Omega}
\end{equation}
The square root, $\Omega$, constructs a similarity transformation from $O$ to a Hermitian operator, 
\begin{equation}
h := \Omega O \Omega^{-1} = h^\dag. \label{Similar}
\end{equation}
Since $O$ is similar to a Hermitian operator, $O$ is diagonalizable with a real spectrum.

\end{proof}


The map from observables to Hermitian operators constructed in the proof of \cref{diagonalizable}, $O \rightarrow \Omega O \Omega^{-1}$, maps a quasi-Hermitian theory to a Hermitian theory with Hermitian inner product \cite{mosta2003equivalence}. However, these two theories can have different notions of locality, since the operator $\Omega$ may contain nonlocal properties.

The constructive proof of \cref{diagonalizable} demonstrates that a single operator of physical significance, such as the Hamiltonian, is compatible with many inner product structures, each of which results in a different quasi-Hermitian theory. Choosing a particular metric can be done by requiring additional operators to have physical significance \cite{QuasiHerm92,kretschmer2001interpretation}. Other, less physical choices, are to pick a set of eigenvectors $\mathcal{U}$ and set $d = \mathbbm{1}$, or perhaps to choose one metric operator which has an explicit analytic realization.

To briefly comment on foundational issues in the theory with infinite Hilbert spaces, note that the theorem of this section freely uses finite dimensional concepts, such as diagonalizability and the assumption that positive operators have inverses defined on the entirety of $\mathcal{H}$. Critically, the similarity transformation defined through $\Omega$ loses physical significance if either the metric or its inverse is an unbounded operator. This is a consequence of the observation that unbounded similarity transformations do not preserve the spectrum of an operator \cite{siegl2012metric}. Typically, the metric is assumed to be bounded to avoid issues with operator domain equalities in the quasi-Hermiticity condition. To guarantee the mathematical validity of the similarity transform, one may be tempted to assume the metric's inverse is bounded as well. However, for certain non-Hermitian Hamiltonians with a real spectrum and an associated metric operator, there exists no metric with a bounded inverse \cite{siegl2012metric}. Without the assumption of a bounded metric inverse, a spectral theorem of the sort mentioned above can't exist. The quasi- Hermitian operator described in \cite{dieudonne} with complex spectrum, is one such counter-example. However, the conclusion that the eigenvalues are real still holds (note the spectrum of an operator is in general larger than the space of eigenvalues).

\subsubsection{Relation to $\mathcal{PT}$-symmetry}

Some treatments of quasi-Hermitian theory start from symmetries of the Hamiltonian, as opposed to the metric. The more relaxed condition that the energy eigenvalues come in complex conjugate pairs ($E_n$, $E^{*}_n$) is equivalent to the existence of a discrete, anti-linear symmetry, $\Theta$ \cite{mostafazadeh2002pseudo2}
\begin{equation}
[H, \Theta] = 0,
\end{equation}
If the $\Theta$-symmetry is \textit{unbroken}, so that at least one set of eigenstates of $H$ are also eigenstates of $\Theta$, then the energies are strictly real \cite{bender1999pt}. Generic classes of $\Theta$-symmetric Hamiltonians experience a boundary in parameter space, referred to as a \textit{phase transition}, beyond which $\Theta$ is \textit{broken}, and inside which $\Theta$ is unbroken \cite{bender1998real, InfiniteLattice, Babbey, freeFermionMetric, jin2009solutions, MyFirstPaper, PTRing}. 

For any quasi-Hermitian operator, there exists a decomposition,
\begin{equation}
\Theta = \mathcal{PT},
\end{equation} 
into a linear operator $\mathcal{P}$ and the \textit{time reversal}, or complex conjugation, operator $\mathcal{T}$, which commute and square to $\mathbbm{1}$ \cite{bender2010pt}. Following the historic work of \cite{bender1998real}, the special case of $\Theta$ which is a product of \textit{parity} and time-reversal symmetries is often used to construct quasi-Hermitian models, and the field of quasi-Hermitian quantum theory is generally synonymous with $\mathcal{PT}-\textit{Symmetric}$ quantum theory\footnote{$\mathcal{PT}$-symmetry and quasi-Hermiticity are distinct for infinite dimensional Hilbert spaces \cite{PTnotQuasi}}.

Interestingly, the metric of \cref{generalMetric} is also $\mathcal{PT}-$symmetric when $H$ is non-degenerate. In the case of a degeneracy, only the choice of $d = \mathbbm{1}$ in \cref{generalMetric} is guaranteed to be $\mathcal{PT}-$symmetric, where $\mathcal{U}$ must be chosen to contain a set of $\mathcal{PT}-$unbroken eigenstates.

\subsection{Local Hamiltonians} \label{local Ham}

To allow for nontrivial interactions between subsystems, the locality condition for a Hamiltonian is weaker than the locality definition used for observables. Qualitatively, a Hamiltonian is local if after a brief period of Schr\"odinger evolution on a generic state, only the qualities of nearby pairs of subsystems influence each other. Given a Hilbert space, $\mathcal{H}$, with a factorization into \textit{sites}, $i$, in a \textit{lattice}, $S$, $\mathcal{H} \simeq \bigotimes_{i \in S} \mathcal{H}_i$, and a graph, $G = (S, E), E\subseteq S \times S$, a Hamiltonian is said to be local if it is a sum over operators local to pairs of vertices in the graph,
\begin{equation}
H = \sum_{(i,j) \in E} H_{ij} \sum_{\alpha, \beta} O^\alpha_i O^\beta_j,
\end{equation}
where $O^{\alpha}_i$ forms an orthonormal basis for the operators local to the site $i$.

\subsection{Lattice Fermions}

A natural setting for studying aspects of locality is the space of quantum many-body problems, since their Hilbert spaces have a natural tensor product decomposition. Free fermions are a cherished example of many-body problems; due to their relationship with a \textit{first quantized} quantum theory on a Hilbert space which scales linearly with the number of sites, many features of free fermionic models can be computed with only polynomial computational resources. Examples of such features include the ground state energy \cite{nielsen2005fermionic} as well as entanglement entropies \cite{FreeFermionEntanglement}. This is in contrast to solving a generic many-body problem, which has an exponential complexity. Examples of $\mathcal{PT}$-symmetric free fermions have been well-studied \cite{InfiniteLattice, jin2009solutions, Babbey, freeFermionMetric, MyFirstPaper, PTRing}. 

Free fermion models are constructed with a realization of the \textit{canonical anti-commutation relations}, a relationship amongst a set of $n$ creation $a^{\dag}_i$ and annihilation $a_j$ operators \cite{nielsen2005fermionic}
\begin{equation}
\{a^{\dag}_i, a^{}_j\} = \delta_{ij} \mathbbm{1}, \,\,\,\,\, \{a^{\dag}_i, a^\dag_j\}  = 0, \label{CCR}
\end{equation}
where $\mathbbm{1}$ is the identity operator and $\delta$ is the Kronecker delta. Lowercase Latin indices, such as $i, j$ above, are elements of the set $[n]$, which is the set of integers ranging from $1$ to $n$. 

A \textit{vacuum} is defined as a state annihilated by all $a_i$, $a_i \ket{0} = 0$. For the rest of this report, it is assumed that the vacuum is \textit{unique}. As a consequence, every state in $\mathcal{H}$ can be constructed through linear combinations of repeated application of creation operators on the vacuum \cite{nielsen2005fermionic}, and the Hilbert space of this representation is $N = 2^n$ dimensional. For notational simplicity, a Hilbert space will be denoted with a subscript, which is a set whose elements refer to a labelling of a basis of the Hilbert space. Let $\mathbbm{P}([n])$ denote the power set of $[n]$, so the Hilbert space of free fermion models will be denoted $\mathcal{H}_{\mathbbm{P}([n])}$.

A useful generalization of the creation and annihilation operators is to construct a representation, $a, a^\dag: \mathbbm{C}^n \rightarrow \text{End}(\mathcal{H}_{\mathbbm{P}([n])})$\footnote{$\text{End}(\mathcal{H})$ denotes the set of linear operators over $\mathcal{H}$}, of the CAR algebra over $\mathbbm{C}^n$,
\begin{align}
a(f) &= \sum_{i \in [n]} f^*_i a^{}_i, \\
a^\dag(g) &= \sum_{i \in [n]} g^{}_i a^\dag_i. \label{generalCAR}
\end{align}

The space of operators on $\mathcal{H}_{\mathbbm{P}([n])}$ can be expressed through linear combinations of products of creation and annihilation operators,
\begin{align}
\text{End}(\mathcal{H}_{\mathbbm{P}([n])}) = \text{span} \left \lbrace \left(\prod_{i \in S_1} a^\dag_i \right) \left(\prod_{j \in S_2} a^{}_j\right): S_1, S_2 \subseteq [n]\right \rbrace. \label{OperatorSpanBasic}
\end{align}
Equivalently, the space of operators on $\mathcal{H}_{\mathbbm{P}([n])}$ can be generated from linear combinations of creation and annihilation operators, so long as the linear combinations arise from vectors which form a linearly independent basis of $\mathbbm{C}^n = \text{span}\{v^\mu|\mu \in [n]\} = \text{span} \{{w^\nu|\nu \in [n]}\}$,
\begin{align}
\text{End}(\mathcal{H}_{\mathbbm{P}([n])}) = \text{span} \left \lbrace
\left(\prod\limits_{\mu \in S_1} a^\dag(v^\mu)\right) \left( \prod\limits_{\nu \in S_2} a(w^\nu)\right): S_1, S_2 \subseteq [n] \right \rbrace. \label{OperatorSpan}
\end{align}
In particular, using the specific basis of $\mathbbm{C^n} = \text{span} \{e_i:i\in [n]\}$, where
\begin{align}
(e_{i})_j = \delta_{ij}, \label{CnBasis}
\end{align}
the decomposition of \cref{OperatorSpan} reduces to that of \cref{OperatorSpanBasic}

\subsubsection{Fermionic Locality} \label{fermion locality}
Given an abstract Hilbert space, $\mathcal{H}$, with no a priori tensor product structure, locality is defined through a unitary transformation, $\iota$, to a theory on a Hilbert space with a tensor product factorization, $\mathcal{H}_A \otimes \mathcal{H}_B$. For a quasi-Hermitian theory defined on $\mathcal{H}$, the metric $\eta$ transforms to the metric $\eta_{\text{TPS}} \in \text{End}(\mathcal{H}_A\otimes \mathcal{H}_B)$ via
\begin{align}
\iota: \mathcal{H} \rightarrow \mathcal{H}_A \otimes \mathcal{H}_B, \,\,\,\,\, \iota \eta \iota^\dag = \eta_{\text{TPS}}.
\end{align}
This map is referred to as a \textit{tensor product structure} \cite{TPS}. 

A tensor product structure for $\mathcal{H}_{\mathbbm{P}([n])}$ is given by a unitary map, the Jordan-Wigner transform  \cite{JordanWigner,nielsen2005fermionic}, into the tensor product of $n$ two-dimensional Hilbert spaces (each equipped with a Pauli matrix, $Z_i$, and a lowering operator, $\sigma_i = \ket{0}_i \bra{1}_i$), 
\begin{align}
\iota_p: \mathcal{H}_{\mathbbm{P}([n])} &\rightarrow \bigotimes_{i\in [n]} \mathcal{H}^i_{[2]}, \\
\iota^{}_p a^{}_{p(i)} \iota^\dag_p &= \bigotimes_{j < i} Z_j \otimes \sigma_i, \label{a to sigma} \\
\iota_p \ket{0} &= \bigotimes_{i \in [n]} \ket{0}_i,
\end{align}
where $p$ is a permutation of sites, a one-to-one map $p:[n]\rightarrow [n]$.  A common choice for $p$ is the identity map. A Jordan-Wigner transform can be inverted, which produces the important identity
\begin{equation}
Z_j = \iota_p \left( a_{p(j)} a^\dag_{p(j)} - a^\dag_{p(j)} a_{p(j)} \right) \iota^\dag_p. \label{a to Z}
\end{equation}

In addition, when there are observables local to $\mathcal{H}_{\mathbbm{P}(A)} \simeq \otimes_{i \in A} \mathcal{H}^i_{[2]}$, we will let statements of the form "$\mathbbm{P}(A)$ contains observables" be synonymous to  "$A$ contains observables".

Notice the lack of a choice of Jordan-Wigner transform which localizes all $a_i$. This is a consequence $a_i$ satisfying anti-commutation relations, while operators local to disjoint subsystems necessarily commute. To find a notion of locality directly from the canonical anti-commutation relations, the space of operators must be restricted to a set of commuting operators. Even products of pairs of creation and annihilation operators satisfy this criteria, motivating the following alternative definition of the space of operators local to a subsystem \cite{BravyiKitaev}.
\begin{definition}
An operator, $O_S$, is said to be \textit{Bravyi-Kitaev local} to a collection of sites, $S \subseteq [n]$, if and only if it's a linear combination of an even product of creation and annihilation operators with indices in this collection, 
\begin{align} 
O_{S} \in \text{span} \left\{ 
\left(\prod\limits_{\mtiny { i \in A \subseteq \mathbbm{P}(S)}} a^\dag_i \right)\left( \prod\limits_{\mtiny { j \in B \subseteq \mathbbm{P}(S)}} a^{}_j \right):  |A|+|B| \equiv 0 \Mod{2} \right\},
\end{align}
where $|A|$ is the cardinality of $A$. This observable is \textit{extensively} local over $S$ if and only if 
\begin{align}
O_{S} \notin \text{span} \left\{ 
\left(\prod\limits_{\mtiny { i \in A \subseteq \mathbbm{P}(S_1)}} a^\dag_i \right)\left( \prod\limits_{\mtiny { j \in B \subseteq \mathbbm{P}(S_1)}} a^{}_j \right): |A|+|B| \equiv 0 \Mod{2}\right\}
\end{align} 
for all proper subsets $S_1 \subset S$.
\end{definition} 

Operator-based definitions of locality were related to tensor product structures in \cite{ObservableLocality}. Explicitly, a set of subalgebras, $\mathcal{A}_i \subset \text{End}(\mathcal{H})$, associated to a set of disjoint subsystems, $i \in \Lambda$, needs to satisfy three axioms to derive an equivalent tensor product structure:
\begin{enumerate}
\item The algebras are independent, $\mathcal{A}_i \cap \mathcal{A}_j = \mathbb{1} \, \forall i \neq j$
\item Two distinct local subalgebras commute, $[ \mathcal{A}_i, \mathcal{A}_j] = 0 \, \forall i \neq j$ \label{subalgebra commutation}
\item The algebras generate the entire space of operators on $\mathcal{H}$, $\vee_{i} \mathcal{A}_i = \text{End}(\mathcal{H})$. \label{generators}
\end{enumerate}
The second axiom is critical to ensure a lack of signalling between subsystems.

Bravyi-Kitaev locality doesn't satisfy the generation axiom of locality, \cref{generators}, since odd products of creation and annihilation operators are assumed to be unphysical. Thus, Bravyi-Kitaev Locality is weaker than a tensor product structure. 

\subsubsection{Free fermions}
Free fermion Hamiltonians are constructed from products of pairs of creation and annihilation operators. This paper restricts itself to the particle-conserving case, where
\begin{equation}
H = \sum_{i,j} \Gamma_{ij} a^{\dag}_i a_j. \label{general_H}
\end{equation}
$H$ is Hermitian if and only if the $n\times n$ matrix $\Gamma$, the \textit{first quantized Hamiltonian}, is Hermitian


A physical inner product requires construction of a metric associated to $H$. One choice of metric follows from a metric associated to the first quantized Hamiltonian,
\begin{equation}
M \Gamma = \Gamma^\dag M \label{first quantized metric},
\end{equation} 
where the adjoint for matrices is taken to be complex conjugate transposition, and $M$ is Hermitian and positive-definite.
A Hermitian solution, $\eta \in \text{End}(\mathcal{H}_{\mathbbm{P}([n])})$, to the operator equations
\begin{align}
\eta a^{\dag}_i &= \sum_j M_{j i} a^{\dag}_j \eta, \label{reduced metric to metric} \\ 
\eta a_i &= \sum_j M^{-1}_{i j} a_j \eta \label{metricOnAnnihilator}
\end{align}
is a valid metric for $H$. The vacuum is an eigenstate of this particular metric as a consequence of \cref{metricOnAnnihilator}, and the solution to \cref{reduced metric to metric} is unique up to a choice for the eigenvalue of the vacuum.

To avoid confusion, the metric $\eta$ will be referred to as the \textit{total metric}, and $M$ will be referred to as the \textit{reduced metric}. 
Note that the \textit{number operator}, $\hat{n} = \sum_i a^{\dag}_i a_i$, is an observable with this choice of metric.
Since the number operator is an observable, two Hamiltonians related by a chemical potential, $H' = H + \mu \hat{n}$, have the same choices of reduced metrics.
%
%

\subsection{Toy model} \label{toy model}

A simple testing ground for the locality theorems proven in \cref{theorems} is a generalization of the models studied in \cite{freeFermionMetric, Babbey, JoglekarSaxena, MyFirstPaper, ZnojilGeneralized, InfiniteLattice}
\begin{align}
H_{\mathcal{PT}} &=  \left(\gamma a^{\dag}_m a_m + \gamma^{*} a^{\dag}_{\overline{m}} a_{\overline{m}} \right) + \sum_{i \in [n]} \left(V_i a^\dag_i a_i\right)\nonumber\\&+\sum_{i \in [n-1]} \left(t^*_{n-i} a^{\dag}_i a_{i+1} + t_i a^{\dag}_{i+1} a_i \right) , \label{PT}
\end{align}
with $V_i = V_{n-i} \in \mathbb{R}$ and $n>1$. The special sites $m, \overline{m} = n-m+1$ are referred to as \textit{impurities}, and the complex parameters $t_i$ are referred to as \textit{hopping amplitudes}.

The toy model only includes one-dimensional nearest neighbour interactions, so $\Gamma_{ij} \neq 0 \Leftrightarrow |i-j| \leq 1$. All free fermions Hamiltonians with nearest neighbour interactions are local with respect to the one-dimensional graph $G_{[n]} = ([n], E_{[n]})$, $E_{[n]} = \{(i, i+1) , i \in [n-1]\}$ and the tensor product structure associated to the standard Jordan-Wigner transform, $\iota_{1}$.

Despite being non-Hermitian, this Hamiltonian has an anti-linear symmetry, $\mathcal{PT}$, the product of combined (linear) \textit{Parity} and (anti-linear) \textit{Time reversal}
\begin{align}
\mathcal{P} a^\dag_i &= a^\dag_{\, \bar{i}} \mathcal{P}, \,\,\,\,\, \mathcal{P} \ket{0} = \ket{0}, \label{Parity} \\
\mathcal{T} a^\dag_i &= a^\dag_i \mathcal{T}, \,\,\,\,\, \mathcal{T} \ket{0} = \ket{0}, \label{timeReverse}
\end{align}
where $\bar{i} = n-i+1$. Equations~(\ref{Parity}) and (\ref{timeReverse}) imply $\mathcal{P}^2 = \mathcal{T}^2 = 1$, $\mathcal{P} = \mathcal{P}^\dag, \mathcal{T} = \mathcal{T}^\dag$, and $[\mathcal{P},\mathcal{T}] = 0$.

This paper doesn't construct local observable algebras for metrics compatible with $H_{\mathcal{PT}}$ in full generality. Rather, we make two simplifications on the space of parameters:

Firstly, we will assume that the phases of the hopping amplitudes, $t_j = |t_j| e^{i \theta_j}$, are parity symmetric, so $\theta_j = \theta_{n-j}$. The phase symmetry serves two purposes: it simplifies the metric, and it helps ensure the reality of the spectrum. Phase symmetry isn't a necessary criteria for the reality of the spectrum, exemplified by \cite{daFonseca2015}.

Interestingly, if the phases are symmetric, they do not affect the spectrum of the Hamiltonian, as evidenced by writing $H$ in terms of an alternative representation of the canonical anti-commutation relations,
\begin{align}
b_i &:= e^{-i \chi_i} a_i, \label{alt CAR rep}\\
\{b_i,b_j^\dag\} &= \delta_{ij} \mathbbm{1}, \\
\chi_{i+1} &:= \sum_{j \in [i]} \theta_j, \,\,\,\,\, \chi_1 = 0, \\
H_{\mathcal{PT}} &= \left(\gamma b^{\dag}_m b^{}_m + \gamma^{*} b^{\dag}_{\overline{m}} b_{\overline{m}} \right) + \sum^n_{i = 1} \left(V_i b^\dag_i b^{}_i\right)\nonumber \\&+ \sum_{i \in [n-1]} \left( |t_{n-i}| e^{i (\theta_i-\theta_{n-i})} b^{\dag}_i b^{}_{i+1} + |t_i| b^{\dag}_{i+1} b^{}_i \right).
\end{align}
However, changing the phases changes the space of metrics, and thus, changes the associated observables. Yet, the question of the existence of extensively local observables with locality defined in either the $a_i,a_j^\dag$ or $b_i, b_j^\dag$ representation yields the same answer. This is a consequence of the transformation defined in \cref{alt CAR rep} not relating creation or annihilation operators at distinct sites.

Secondly, we study two special cases, summarized in the subsections below:

\subsubsection{Farthest Impurities}

Properties of $H_{\mathcal{PT}}$ investigated in this section assume $t_i = t$. Secondly, we assume $m=1$. Thirdly, we choose $t \in \mathbbm{R}$, due to the spectral equivalence and equivalence of Bravyi-Kitaev locality between $t_i = t$ and $t_i = |t|$. Lastly, we set $V_i = 0$. A convenient choice of units adopted in this paper sets $t=1$
. Explicitly,
\begin{equation}
H_{\text{XX}} = \left(\gamma a^{\dag}_1 a_1 + \gamma^{*} a^{\dag}_{n} a_{n} \right) + \sum_{i \in [n-1]} \left(a^{\dag}_i a_{i+1} + a^{\dag}_{i+1} a_i \right). \label{XX}
\end{equation}
The set of parameters such that $H_{XX}$ is $\mathcal{PT}$-unbroken is known to contain some subsets, such as the unit disk $|\gamma| = 1$ \cite{freeFermionMetric, Willms2008,jin2009solutions}, but in general is only known numerically.
One analytic solution to the reduced metric in this case is \cite{farImpurityMetric,SSHMetric},
\begin{equation}
M_{ij} = \begin{cases}
\,\,\,\,\,1 &  i = j \\
-i \,\text{Im} \gamma\, \left(\gamma^{*} \right)^{j-i-1} &   $i < j$ \\
\,\,\,\,\,i \,\text{Im} \gamma\, \left(\gamma \right)^{i-j-1} &  $i > j$
\end{cases}. \label{not positive}
\end{equation}
While satisfying Hermiticity and \cref{first quantized metric}, $M$ is positive definite for only a fraction of the $\mathcal{PT}$-unbroken region.

\subsubsection{Nearest neigbour impurities}

The second special case fixes $n = 2m$, but leaves the amplitudes $t_i \neq 0$ arbitrary. A 1-parameter family of reduced metrics is given in \cref{badass metric} \cite{PTModels}.

Importantly, the metric decomposes into parity blocks,
\begin{equation}
M_{ij} \neq 0 \Leftrightarrow i = \bar{j}.
\end{equation}
Its matrix elements are given by the following recurrence relations
\begin{align}
\begin{pmatrix}
M_{mm} & M_{m \, m+1} \\
M_{m+1 \, m} & M_{m+1 \, m+1}
\end{pmatrix}
&= \begin{pmatrix}
1 & \displaystyle\frac{\beta - i  \gamma}{t_m}\cr
\displaystyle\frac{\beta + i  \gamma}{t^{*}_m} & 1
\end{pmatrix} \nonumber
\\
M_{ii} &= \displaystyle\frac{t_{n-i}}{t_i} M_{i+1\, i+1} \nonumber \\
M_{i \bar{i}} = M^*_{\bar{i} i} &= \displaystyle\frac{t_{n-i}}{t^*_i} M_{i+1\, n-i} \nonumber \\ M_{\bar{i} \bar{i}}   &= \displaystyle\frac{t_{n-i}}{t_i} M_{n-i\, n-i} \label{badass metric},
\end{align}
where the index $i < m$ and where $\beta \in \mathbbm{R}$ satisfies $\beta^2 + (\text{Im}\gamma )^2/|t_m|^2 < 1$ \cite{PTModels}.
In addition, it's known that the $\mathcal{PT}$-symmetry breaking boundary is $\text{Im} \gamma = |t_m|$.

The nearest neighbour impurity case experiences an unusual phase transition: the entire spectrum is purely imaginary for $\text{Im}\gamma > |t_m|$ \cite{MyFirstPaper,PTModels}, and $H$ has only $n/2$ eigenvalues at $\text{Im} \gamma = |t_m|$. 

\section{Local Observable Algebras in Quasi-Hermitian Theories} \label{theorems}

\subsection{Local Observables in the Tensor Product Picture} \label{GeneralTheorems}

A natural clue for the studies of locality would be the tensor product structure of the metric, since tensor products are used to define locality, and the metric is used to define observables. Importance of the metric's tensor product structure is exemplified for instance by 
\cite{MetricTensorStructure}, the central claim of which is that \textit{a quasi-Hermitian algebra contains local operator algebras generated by Hermitian observables if and only if a tensor product,}
\begin{equation}
\eta = \eta_A \otimes \eta_B. \label{eta_tensor}
\end{equation} 
A general operator in $\mathcal{H}$ can't be written as a tensor product of the form (\ref{eta_tensor}), however, a general operator has an \textit{operator Schmidt decomposition} \cite{operatorSchmidt} to a subsystems $A$ and $B$
\begin{equation}
\eta = \sum_{i} \sqrt{\chi_i} \eta^i_A \otimes \eta^i_B,
\end{equation}
where $\chi_i \geq 0$, and in the case where $\eta$ is a Hilbert-Schmidt operator, $\eta^i_{A,B}$ are orthonormal under the Hilbert-Schmidt inner product
\begin{equation}
\text{Tr}\, {\eta^i_A}^{\dag} \eta^{\,j}_A = \delta_{ij}, \,\,\,\,\,\,\, \text{Tr}\, {\eta^i_B}^{\dag} \eta^{\,j}_B = \delta_{ij}. \label{SubMetricOrth}
\end{equation}
In the more general case where $\eta$ is not a Hilbert-Schmidt operator, it's still the case that $\eta^{i}_{A,B}$ are linearly independent. 
Let's refer to such a set of operators, $\eta^i_{A,B}$ as a set of Schmidt operators associated to $\eta$.

The \textit{Schmidt Number} of this decomposition, $n_{AB}$, is the number of nonzero $\chi_i$. While a generic operator may admit different Schmidt decompositions, the Schmidt number for all decompositions is the same. Note the existence of a Schmidt decomposition such that $\eta^i_{A,B} = {\eta^i}^\dag_{A,B}$ follows from the existence of a basis for the space of Hermitian operators in $\text{End}(\mathcal{H})$ such that the basis elements are tensor product operators. However, since the space of positive definite operators is not a vector space, in general, the operators $\eta^i_{A,B}$ aren't positive definite. 

With the above in mind, the following theorem relating to the existence of local observables can now be proven:

\begin{theorem} \label{quasilocal_theorem}
The following are equivalent:
\begin{enumerate}
\item There exists an operator local to subsystem $B$, $O = \mathbbm{1}_A \otimes O_B$, which is a quasi-Hermitian observable with respect to a metric, $\eta$. In addition, the operator $O_B$ is not a multiple of the identity operator.
\item There exists a simultaneous solution, $O_B$, to the operator equations
\begin{equation}
\eta^{\,j}_B O_B = O^{\dag}_B \eta^{\,j}_{B} \,\,\,\,\,\,\,\,\,\,\, \forall j : \chi_j > 0, \label{block metrics}
\end{equation}
where $O_B$ is not a multiple of the identity operator. 
\item There exists a partitioning of $\mathcal{H}_R \oplus \mathcal{H}_{B-R} = \mathcal{H}_B$ such that any set of Schmidt operators, $\{\eta^i_B\}$, associated to the metric are simultaneously reducible under a invertible transformation, $S$, of the form
\begin{equation}
S^{\dag} \eta^i_B S = \begin{pmatrix}
\eta^i_R & 0 \cr
0 & \eta^i_{B-R}
\end{pmatrix}. \label{reduce metric}
\end{equation}
\end{enumerate}
\end{theorem}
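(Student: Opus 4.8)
The plan is to prove the theorem by establishing $(1)\Leftrightarrow(2)$ directly and then closing the loop through $(2)\Rightarrow(3)\Rightarrow(2)$, so that all three conditions become equivalent. First I would tackle $(1)\Leftrightarrow(2)$ by inserting the operator Schmidt decomposition $\eta=\sum_i\sqrt{\chi_i}\,\eta^i_A\otimes\eta^i_B$ into the quasi-Hermiticity condition $\eta O=O^\dagger\eta$ for $O=\mathbbm{1}_A\otimes O_B$. Since $O$ and $O^\dagger$ act as the identity on $A$, both sides factor through the $B$-leg, giving
\[
\sum_{i}\sqrt{\chi_i}\,\eta^i_A\otimes\bigl(\eta^i_B O_B-O_B^\dagger\eta^i_B\bigr)=0.
\]
The key input is that the left Schmidt operators $\{\eta^i_A\}$ are linearly independent, as recorded above, so each coefficient operator must vanish; because $\chi_i>0$ on the retained terms this is exactly the system \eqref{block metrics}. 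Running the computation backwards recovers $\eta O=O^\dagger\eta$, and since the metric $\eta$ is positive-definite, \cref{diagonalizable} certifies that $O$ is a genuine observable. The clause that $O_B$ is not a multiple of the identity transfers verbatim between $(1)$ and $(2)$.

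For $(2)\Rightarrow(3)$ I would first invoke the equivalence just proven to regard $O=\mathbbm{1}_A\otimes O_B$ as quasi-Hermitian with respect to the positive-definite $\eta$, so that \cref{diagonalizable} guarantees $O_B$ is diagonalizable with real spectrum. I would then write $O_B=S\Lambda S^{-1}$ with $\Lambda$ real diagonal, grouping the eigenvalues so that one chosen eigenvalue labels $\mathcal{H}_R$ and the remainder labels $\mathcal{H}_{B-R}$; since $O_B$ is not scalar, both blocks are nonzero. Using $O_B^\dagger=(S^\dagger)^{-1}\Lambda S^\dagger$ and congruence-transforming \eqref{block metrics} by $S$ collapses each equation to $\bigl(S^\dagger\eta^i_B S\bigr)\Lambda=\Lambda\bigl(S^\dagger\eta^i_B S\bigr)$, i.e. every transformed Schmidt operator commutes with $\Lambda$ and is therefore block-diagonal in the eigenvalue grading, which is precisely \eqref{reduce metric}. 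The same $S$ works for \emph{any} admissible set of Schmidt operators, because a change of Schmidt decomposition only takes linear combinations within $\mathrm{span}\{\eta^i_B:\chi_i>0\}$, and block-diagonality is preserved under linear combinations.

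Finally, $(3)\Rightarrow(2)$ is a direct construction: given the reducing $S$, set $O_B=S\,\mathrm{diag}(\lambda_R\mathbbm{1}_R,\lambda_{B-R}\mathbbm{1}_{B-R})\,S^{-1}$ for two distinct reals $\lambda_R\neq\lambda_{B-R}$, which is not a multiple of the identity since both blocks survive. Congruence-transforming by $S$ reduces \eqref{block metrics} to the assertion that the block-diagonal $S^\dagger\eta^i_B S$ commutes with the block-scalar matrix $\mathrm{diag}(\lambda_R\mathbbm{1}_R,\lambda_{B-R}\mathbbm{1}_{B-R})$, which is immediate. I expect the main obstacle to be the $(2)\Rightarrow(3)$ step: condition $(2)$ is only a family of intertwining relations against the \emph{non-definite} operators $\eta^i_B$, from which the diagonalizability of $O_B$ is not visible in isolation. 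The decisive move is to route back through the positive-definite total metric and apply the spectral \cref{diagonalizable}; once diagonalizability is secured, the passage from ``commutes with $\Lambda$'' to the block form is routine, provided the eigenspaces are grouped into the two-block partition correctly.
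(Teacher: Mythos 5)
Your proposal is correct and follows essentially the same route as the paper's proof: the $(1)\Leftrightarrow(2)$ step via linear independence of the $A$-side Schmidt operators (the paper implements exactly this with dual-basis functionals $\phi^i_A$), the passage to $(3)$ by routing through the positive-definite total metric and \cref{diagonalizable} to diagonalize $O_B$ and deduce that commutation with the real diagonal $\Lambda$ block-reduces $S^\dagger \eta^i_B S$, and the converse by the same explicit two-eigenvalue construction $O_B = S\,\mathrm{diag}(\lambda_R \mathbbm{1}_R, \lambda_{B-R}\mathbbm{1}_{B-R})\,S^{-1}$. The only differences are cosmetic: you close the logical cycle through $(2)$ rather than $(1)$, and you additionally (and correctly) justify why a single $S$ serves every admissible Schmidt decomposition, a point the paper leaves implicit.
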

\noindent Keeping $O_B$ distinct from a multiple of the identity is a sort of triviality condition. Such an operator is trivially a local observable, however, this observable contains no physical information, as a measurement of such an observable always yields the degenerate eigenvalue.

In addition, we only provide a proof of the third item for the case of compact observables, where one can safely diagonalize an operator in the normal sense. The validity of this item outside of this case is outside the scope of this report.
\begin{proof}
Proof of $\textit{1}\Leftrightarrow \textit{2}$: Assume an observable has a local decomposition, $O = \mathbbm{1}_A \otimes O_B$.
Let $\phi^i_{A,B}: \text{End}(\mathcal{H}_{A,B}) \rightarrow \mathbb{C}$ denote a linear functional satisfying 
\begin{equation}
\phi^i_{A,B} (\eta^{j}_{A,B}) = \delta_{ij}.
\end{equation}
Applying the map $\phi^i_{A} \otimes \mathbbm{1}_B$ to both sides of the quasi-Hermiticity condition, \cref{operator}, for $O$
imposes the constraints \cref{block metrics} on $O_B$. In the case where $\eta$ is a Hilbert-Schmidt operator, this map is realized as taking a partial trace over $A$ after multiplication by ${\eta^{\,j}_{A}}^{\dag} \otimes \mathbbm{1}$. Conversely, if $O_B$ satisfies \cref{block metrics}, $O = \mathbb{1}_A \otimes O_B$ is quasi-Hermitian with respect to $\eta$.

Proof of $\textit{1} \Rightarrow \textit{3}$: Using \cref{diagonalizable}, and noting that $O$ is diagonalizable if and only if $O_B$ is diagonalizable
, $O_B$ has a diagonalization $O_B = S D S^{-1}$, with $D = D^{\dag}$. Substituting this into \cref{block metrics}
\begin{equation}
S^{\dag} \eta^{i}_B S D = D S^{\dag} \eta^i_B S.
\end{equation}
For $O$ to be nontrivial, there must be at least two distinct elements $d_1 \neq d_2$ of $D$. For any $\ket{d_1}$ and $\ket{d_2}$ from the eigenspaces of $d_1$ and $d_2$ respectively, the matrix elements of $S^{\dag} \eta^i_B S$ vanish
\begin{equation}
\braket{d_1|S^{\dag} \eta^{i}_B S D|d_2} = \braket{d_1|D S^{\dag} \eta^i_B S|d_2} = 0.
\end{equation}
Thus, $S^{\dag} \eta^i_B S$ is reducible to the eigenspaces of $D$, completing this direction of the proof.
 
Proof of $\textit{3} \Rightarrow \textit{1}$: If $S$ satisfying \cref{reduce metric} exists, then \cref{block metrics} can be rewritten as 
\begin{equation}
\begin{pmatrix}
\eta^i_R & 0 \cr
0 & \eta^i_{B-R}
\end{pmatrix} (S^{-1} O_B S) = (S^{-1} O_B S)^{\dag} \begin{pmatrix}
\eta^i_R & 0 \cr
0 & \eta^i_{B-R}
\end{pmatrix}.
\end{equation}
A nontrivial solution for $O_B$ can be constructed with distinct eigenvalues $d_1, d_2$,
\begin{equation}
O_B = S \begin{pmatrix}
d_1 \mathbbm{1}_R & 0 \\
0 & d_2 \mathbbm{1}_{B-R}
\end{pmatrix} S^{-1}.
\end{equation}
\end{proof}
The existence of observables local to a subsystem $A$ doesn't imply that they are extensively local. There could exist a partitioning of $A$ into smaller subsystems $A_1, A_2$ such that every observable in $A$ is of the form $O_A = \mathbbm{1}_{A_1} \otimes O_{A_2}$. Theorem~\ref{quasilocal_theorem} offers limited support in this matter: it's simple to demonstrate that if every observable in $A$ is of the form $O_A = \mathbbm{1}_{A_1} \otimes O_{A_2}$, then $\{ \eta^i_{A_1} \}$, is irreducible and $\{\eta^i_{A_2} \}$ is reducible. The converse of this statement is not true: a simple counterexample entails picking $\mathcal{H}_{A_2}$ to be a tensor product of $\mathcal{H}_{A_3}\otimes \mathcal{H}_{A_4}$, where $\{\eta^i_{A_4} \}$ is reducible, $\{\eta^i_{A_3} \}$ is irreducible, and observables exist in $A_1 \cup A_3$.

\begin{corollary} \label{Schmidt_lower_bound}
If $n_{AB} = 1$, so $\eta = \eta_A \otimes \eta_B$, extensively local observables exist in both $A$ and $B$.
\end{corollary}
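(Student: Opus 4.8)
The plan is to reduce the factorized case to a statement on a single tensor factor, and then to secure existence and extensive locality separately. First I would record that $n_{AB}=1$ means $\eta = \eta_A \otimes \eta_B$. Recalling that the Schmidt operators may be chosen Hermitian, both $\eta_A$ and $\eta_B$ are Hermitian, and positive definiteness of the product forces, after possibly replacing $(\eta_A,\eta_B)$ by $(-\eta_A,-\eta_B)$, that each factor is positive definite. For a metric with a single Schmidt term the conditions \cref{block metrics} of \cref{quasilocal_theorem} collapse to the single equation $\eta_B O_B = O_B^{\dagger}\eta_B$; so an observable local to $B$ is exactly a solution $O_B$, not proportional to $\mathbbm{1}_B$, of this one quasi-Hermiticity relation with respect to the positive-definite $\eta_B$, and symmetrically for $A$.

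Existence is then immediate from \cref{diagonalizable}: writing $\Omega_B = \eta_B^{1/2}$, every operator $O_B = \Omega_B^{-1} h_B \Omega_B$ with $h_B = h_B^{\dagger}$ solves $\eta_B O_B = O_B^{\dagger}\eta_B$, and choosing any Hermitian $h_B$ that is not a multiple of the identity yields a nontrivial local observable. The same argument applied to $\Omega_A = \eta_A^{1/2}$ produces local observables in $A$.

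The main step, and the place where I expect the real work, is upgrading these to \emph{extensively} local observables, since the similarity transform $X \mapsto \Omega_B^{-1} X \Omega_B$ need not send an extensively local Hermitian operator to an extensively local observable; one cannot simply transform a ``fully entangling'' Hermitian operator and hope to land outside every smaller subsystem. Instead I would argue by a covering argument. The solution set $Q_B = \Omega_B^{-1}\{h_B = h_B^{\dagger}\}\,\Omega_B$ is a real vector space carried bijectively from the Hermitian operators, so its complex span is all of $\text{End}(\mathcal{H}_B)$. For each proper subsystem $B_1 \subset B$ the operators local to $B_1$ form a proper complex subspace $L_{B_1} \subsetneq \text{End}(\mathcal{H}_B)$; were $Q_B \subseteq L_{B_1}$, then its complex span $\text{End}(\mathcal{H}_B)$ would also lie in $L_{B_1}$, a contradiction, so $Q_B \cap L_{B_1}$ is a proper real subspace of $Q_B$.

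Since there are only finitely many subsystems $B_1$, and a real vector space is never a finite union of proper subspaces, some $O_B \in Q_B$ lies outside every $L_{B_1}$. Such an $O_B$ is local to no proper subsystem of $B$, hence extensively local over $B$, and is automatically not a multiple of $\mathbbm{1}_B$ (multiples of the identity being local to the trivial subsystem). The resulting $O = \mathbbm{1}_A \otimes O_B$ is therefore an extensively local observable in $B$, and the identical argument with $A$ and $B$ interchanged completes the proof.
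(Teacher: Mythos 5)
Your proof is correct, and its existence half is essentially the paper's own argument in different clothing: the paper constructs local observables as a Hermitian operator times the metric factor, i.e.\ $O_B = h'_B\,\eta_B$ with $h'_B = {h'_B}^\dag$, and this is exactly your solution set, since $\Omega_B^{-1}h_B\Omega_B = \left(\Omega_B^{-1}h_B\Omega_B^{-1}\right)\eta_B$ with $\Omega_B^{-1}h_B\Omega_B^{-1}$ Hermitian. The genuine difference is the treatment of the qualifier \emph{extensively}. The paper's proof stops once local observables are produced (it also remarks that the result follows from the third item of \cref{quasilocal_theorem}, since a Hermitian $\eta_B$ is diagonalizable and hence reducible to its eigenspaces), and it never verifies that some solution fails to be local to every proper subsystem of $B$; your covering argument supplies precisely this missing step, and it is sound: the solution set $Q_B$ is a real subspace whose complex span is all of $\text{End}(\mathcal{H}_B)$, so its intersection with each proper complex subspace $L_{B_1}$ is a proper real subspace of $Q_B$, and a real vector space cannot be written as a finite union of proper subspaces. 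Two caveats are worth recording. First, the finiteness of the union relies on the paper's lattice setting, where $\mathcal{H}_B$ factors over finitely many sites so that $B$ has only finitely many subsystems; for an abstract bipartition with no finer site structure, ``subsystem of $B$'' would range over a continuum of tensor product structures and the covering step would not close. Second, your worry that conjugation by the (generally nonlocal within $B$) operator $\Omega_B$ could destroy extensive locality is legitimate, and is exactly what makes the covering step necessary. If you want to remove the finiteness hypothesis entirely, there is a shortcut: choose $h_B$ with nondegenerate spectrum; then $O_B = \Omega_B^{-1}h_B\Omega_B$ has nondegenerate spectrum, whereas any operator of the form $\mathbbm{1}_{B-B_1}\otimes O_{B_1}$ has every eigenvalue with multiplicity at least $\dim\mathcal{H}_{B-B_1}\ge 2$, so this $O_B$ is extensively local over $B$ under any notion of subsystem.
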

\begin{proof}
This can be proven through explicit construction of local observables
\begin{equation}
O = O_{A} \eta_{A} \otimes O_{B} \eta_{B}, \,\,\, O_{A} = {O_A}^{\dag}, \, O_{B} = {O_B}^{\dag}, 
\end{equation} 
though we'd like to point out that this is a corollary of \cref{quasilocal_theorem}. Since the metric $\eta$ is both Hermitian, $\eta_{A,B}$ are also Hermitian, so they're diagonalizable, and thus reducible to the spaces of their eigenvectors.
\end{proof}

\begin{corollary}
In the case of a finite Hilbert space, if $n_{AB} > (\min \{|A|^2, |B|^2\}-1)^2 + 1$, no local observables exist in the smaller of $A$ and $B$.
\end{corollary}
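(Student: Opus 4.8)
The plan is to prove the contrapositive via item 3 of \cref{quasilocal_theorem}: the existence of a nontrivial observable local to the smaller factor is equivalent to the simultaneous block-reducibility of the Schmidt operators $\{\eta^i_B\}$ under a congruence, and such reducibility confines these operators to a subspace whose dimension I can bound explicitly. Without loss of generality take $B$ to be the smaller factor and write $d = |B| = \dim \mathcal{H}_B$, so that $n_{AB} \le d^2$. I would assume, toward a contradiction, that a nontrivial observable local to $B$ exists.

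First I would invoke the equivalence of items $1$ and $3$ in \cref{quasilocal_theorem}. Since the ambient space is finite-dimensional, the observable $O = \mathbbm{1}_A \otimes O_B$ is quasi-Hermitian with respect to the positive-definite metric $\eta$, so by \cref{diagonalizable} both $O$ and $O_B$ are diagonalizable with real spectrum; the triviality hypothesis forces $O_B$ to have at least two distinct eigenvalues. Item $3$ then supplies an invertible $S$ together with a genuine two-block partition $\mathcal{H}_R \oplus \mathcal{H}_{B-R} = \mathcal{H}_B$, with $r := \dim \mathcal{H}_R$ and $s := \dim \mathcal{H}_{B-R}$ both at least $1$ and $r+s = d$, such that the congruence $\eta^i_B \mapsto S^{\dagger}\eta^i_B S$ simultaneously block-diagonalizes every Schmidt operator as in \cref{reduce metric}.

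The core step is a dimension count. The map $X \mapsto S^{\dagger} X S$ is a linear automorphism of $\mathrm{End}(\mathcal{H}_B)$ and hence preserves linear independence; it carries the $n_{AB}$ linearly independent Schmidt operators into the block-diagonal subspace $\mathrm{End}(\mathcal{H}_R) \oplus \mathrm{End}(\mathcal{H}_{B-R})$, which has dimension $r^2 + s^2$. Therefore $n_{AB} \le r^2 + s^2$. Maximizing $r^2 + s^2$ subject to $r+s = d$ and $r,s \ge 1$ puts the maximum at the extreme partition $\{1, d-1\}$, giving $n_{AB} \le (d-1)^2 + 1 = (|B|-1)^2 + 1$. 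Since $B$ is the smaller factor, $(|B|-1)^2 + 1 \le (\min\{|A|^2, |B|^2\}-1)^2 + 1$, so the stated hypothesis $n_{AB} > (\min\{|A|^2, |B|^2\}-1)^2 + 1$ contradicts this inequality, which proves the claim (and in fact establishes the sharper threshold $(|B|-1)^2+1$).

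The main obstacle is not the arithmetic but securing item $3$ in exactly the form needed, namely guaranteeing that the reduction is into two nonzero blocks rather than the trivial $R = \mathcal{H}_B$. This is precisely where finite-dimensionality and \cref{diagonalizable} enter: they ensure $O_B$ is diagonalizable with at least two distinct real eigenvalues, so its eigenspace decomposition yields a proper two-block partition. A secondary point I would verify is that the count may be run either over the full complex space $\mathrm{End}(\mathcal{H}_B)$ or, using the Hermitian representatives, over the real space of Hermitian block-diagonal operators; both yield the same dimension $r^2 + s^2$, and in each case the maximization genuinely sits at the extreme partition rather than the balanced one.
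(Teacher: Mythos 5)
Your proof is correct and is essentially the paper's own argument: both proceed by contradiction via item 3 of \cref{quasilocal_theorem}, confining the linearly independent Schmidt operators $\eta^i_B$ to a simultaneously block-diagonal subspace and counting its dimension, which is maximized at $(\dim \mathcal{H}_B - 1)^2 + 1$ by the extreme partition. Your parenthetical sharper threshold $(|B|-1)^2+1$ is in fact exactly what the paper's count of $\mathcal{M}^2 - 2\mathcal{M} + 2$ unfixed parameters yields.
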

\begin{proof}
For simplicity assume without loss of generality that $|B| \leq |A|$. The proof proceeds by contradiction. Assume such a local observable exists in $B$. Construct a set of Hermitian Schmidt operators, $\eta^i_{A,B} = {\eta^i}^\dag_{A,B}$. By \cref{quasilocal_theorem}, the Schmidt operators, $\eta^i_B$, must be simultaneously reducible. The decomposition fixes $2(\mathcal{M}-|R|)|R|$ matrix elements of each $\eta^i_B$ in a suitable basis, which is minimized by a block of size $|R| = 1$. This leaves $\mathcal{M}^2-2 \mathcal{M} + 2$ unfixed parameters in $\eta^i_B$. If the dimension of $\text{span} \{ \eta^i_B: i \in [n_{AB}] \}$ exceeds this bound, the Schmidt operators must be linearly dependent, a contradiction.
\end{proof}

Tighter bounds on the Schmidt number than those presented above can't exist. The tightness of the lower bound is demonstrated by a metric with a Schmidt number of two and no local observables. Such a metric can be constructed on a Hilbert space over two qubits,
\begin{equation}
\eta_{\text{min}} = (\mathbbm{1} + \beta \sigma_x) \otimes (\mathbbm{1} + \beta \sigma_x) + \beta^2 \sigma_Y \otimes \sigma_Y,
\end{equation}
where $\beta \in \mathbbm{R}$ is chosen sufficiently small so $\eta_{\text{min}}$ is positive definite, and $\sigma_{x,y}$ are Pauli matrices. A brief calculation shows \cref{block metrics} has no solutions, thus, this metric has no observables of the form $O = \mathbbm{1} \otimes O_B$ or $O = O_A \otimes \mathbbm{1}$.

The tightness of the upper bound is demonstrated by a construction of a metric with Schmidt number $n_{AB} = (\min\{|A|^2,|B|^2\}-1)^2+1$ and local observables. Defining the orthonormal set of matrices $A^{ij}$ such that $A^{ij}_{kl} = \delta_{ik} \delta_{jl}$, one such metric is
\begin{equation}
\eta_{\text{max}} =  \alpha \mathbbm{1} + \sum_{ij} \begin{pmatrix}
0 & 0 \\
0 & A^{ij}
\end{pmatrix}_A \otimes \begin{pmatrix}
0 & 0 \\
0 & A^{ij}
\end{pmatrix}_B,
\end{equation}
where $\alpha > 0$ is chosen to be sufficiently large so that $\eta_{\text{max}}$ is positive definite. Note $\begin{pmatrix}
0 & 0 \\
0 & 1
\end{pmatrix} \otimes \mathbbm{1}$ is an observable with respect to this metric.

\subsection{Local Observables in Free Fermions} \label{free fermion locality}

Applying \cref{quasilocal_theorem} to a generic many-body problem requires simultaneously reducing matrices whose dimensions scale exponentially in the size of the corresponding subsystems, themselves derived from an operator which scales exponentially with the lattice size. 

However, certain aspects of locality for free fermions can be found with only polynomial computations. As with other polynomial-time calculations for free fermions, the technique is to reduce the problem into a first quantized setting. To quantify locality through observables in a first quantized setting, a correspondence between observables associated to the total metric and observables associated to the reduced metric is desired. Explicitly, $o$ is an observable in a quasi-Hermitian theory with metric $M$,
\begin{equation}
M o = o^\dag M, \label{reduced observables}
\end{equation}
if and only if 
\begin{equation}
O = \sum_{ij} o_{i j} a^\dag_i a_j \label{proj_obs} 
\end{equation}
is a quasi-Hermitian observable with respect to a metric $\eta$ which reduces to $M$ via \cref{reduced metric to metric}. 
The subclass of operators of the form \cref{proj_obs} which are Bravyi-Kitaev local to a subsystem $A \subset [n]$ are simply those satisfying $o_{ij} = 0$ if $i \in A'$ or $j \in A'$, where $A' = [n]-A$ denotes the complement of $A$. Bravyi-Kitaev extensively local observables satisfy the additional constraint that for all $i \in A$, there exists $j \in A$ such that either $o_{ij} \neq 0$ or $o_{ji} \neq 0$. Let's refer to such matrices $o$ as extensively local reduced observables. The existence of extensively local reduced observables turns out to be necessary for the existence of extensively local observables extent in the theory of free fermions, as will be shown shortly.

Before proving this result, some elaborations on the extensively local reduced observables will be presented. 
Since extensively local reduced observables are block matrices, let's define some notation relating to block decompositions of matrices. Let $M^{AB}$ denote the block of matrix elements $M_{ij}$ with $i \in A, j \in B$. In particular (rearranging columns and rows in $M$ as necessary),
\begin{equation}
M = \begin{pmatrix}
M^{AA} & M^{A A'} \\ M^{A' A} & M^{A' A'}
 \end{pmatrix},
\end{equation}
and $M^{\{i\} [n]}$ denotes the $i^\text{th}$ row of $M$. When necessary, the matrix elements $M^{AB}$ will be considered as an operator mapping $\text{span} \{e_i:i\in B\}$ to $\text{span} \{e_i:i\in A\}$.

In addition, let $K(A) = \dim \ker M^{A' A}$, and let $\{w^\mu | \mu \in [K(A)]\}$ denote a basis of $\ker M^{A' A}$. A brief examination of \cref{reduced observables} shows that the most general local reduced observable is a matrix of the form 
\begin{equation}
o_{ij} = \begin{cases} \sum\limits_{\mu,\nu \in [K(A)]} \alpha_{\mu \nu} \left( w^\mu {w^\nu}^\dag M^{AA} \right)_{ij} & \text{if } $i, j \in A$ \\
0 & \text{otherwise}
\end{cases},
\end{equation}
where $\alpha_{\mu \nu} = \alpha^*_{\nu \mu} \in \mathbbm{C}$. Note $K(A)$ is a measure of how many local observables are in subsystem $A$.

The above statements relating to $\ker M^{A' A}$ are readily generalized to the case of observables in the full theory of free fermions, as demonstrated in the following theorem: 

\begin{theorem} \label{polynomial theorem}
Bravyi-Kitaev extensively local observables in subsystem $A$ which are quasi-Hermitian with respect to the metric of \cref{reduced metric to metric} exist if and only if \begin{equation} K(A) := \dim \ker M^{A' A} > \dim \ker M^{S' S} \label{Kernel equation} \end{equation} for all proper subsets $S \subset A$. In addition, an operator, $O$, is a Bravyi-Kitaev local observable if and only if it can be expressed of the form (using the notation of \cref{generalCAR})
\begin{equation}
O = \sum_{{\mtiny S_1, S_2}} O_{S_1 S_2} \left(\prod_{\mu \in S_1} a^\dag(w^\mu)\right) \left( \prod_{\nu \in S_2} a(M^{AA} w^\nu) \right)\label{GeneralBKQH},
\end{equation}
where $S_1, S_2 \in \mathbbm{P}([K(A)])$, $O_{S_1 S_2} = O^*_{S_2 S_1}$, $O_{S_1 S_2} = 0$ when $|S_1|+|S_2| \equiv 1 \mod 2$, and $\{w^\mu|\mu \in [K(A)] \}$ is a basis of $\ker M^{A' A}$. 
\end{theorem}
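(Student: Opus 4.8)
The plan is to reduce everything to a single set of ``good'' fermionic modes adapted jointly to $A$ and to the metric, and to recognize the observable condition as Hermiticity under the physical adjoint $O^\sharp := \eta^{-1} O^\dagger \eta$ (so that $O$ is quasi-Hermitian iff $O^\sharp = O$). First I would record how $\eta$ conjugates the smeared operators of \cref{generalCAR}: the defining relations \cref{reduced metric to metric} give $\eta\, a^\dagger(g)\, \eta^{-1} = a^\dagger(Mg)$ and $\eta\, a(g)\, \eta^{-1} = a(M^{-1}g)$, hence $(a^\dagger(g))^\sharp = a(Mg)$ and $(a(f))^\sharp = a^\dagger(M^{-1}f)$. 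Because $M^{AA}$ is a principal block of a positive matrix it is positive definite, and so is its restriction to $\mathcal{K}_A := \ker M^{A'A}$; I would pick a basis $\{w^\mu\}$ of $\mathcal{K}_A$ orthonormal in $\langle u,v\rangle_M := u^\dagger M^{AA} v$ and set $b^\dagger_\mu := a^\dagger(w^\mu)$, $b_\mu := a(M^{AA} w^\mu)$. Using $\{a(f),a^\dagger(g)\}=\langle f,g\rangle$ these obey the canonical anticommutation relations, and since $M w^\mu = M^{AA} w^\mu$ is supported on $A$ while $M^{-1}(M^{AA}w^\mu)=w^\mu$, the conjugation relations collapse to the clean duality $(b^\dagger_\mu)^\sharp = b_\mu$, $(b_\mu)^\sharp = b^\dagger_\mu$.

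The ``if'' direction is then immediate: every $w^\mu$ and every $M^{AA}w^\mu$ is supported on $A$, so each factor in \cref{GeneralBKQH} is local to $A$ and the evenness of $|S_1|+|S_2|$ makes the product a legitimate Bravyi-Kitaev element; applying $\sharp$ swaps $b^\dagger \leftrightarrow b$ and reverses the orderings, and the constraint $O_{S_1 S_2} = O^*_{S_2 S_1}$ is exactly what is needed (after the matched reordering signs cancel) to give $O^\sharp = O$, i.e. \cref{operator}.

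The substance is the converse. I would complete the $b_\mu$ to a full local system by choosing a basis $\{u^\rho\}$ of the $M$-orthogonal complement of $\mathcal{K}_A$ in $\mathbb{C}^A$ and setting $c^\dagger_\rho := a^\dagger(u^\rho)$, $c_\rho := a(M^{AA}u^\rho)$; $M$-orthogonality makes $\{b,c\}$ a full CAR system generating every Bravyi-Kitaev operator local to $A$, so such an observable $O$ is an even polynomial in $b^\dagger,b,c^\dagger,c$. The point is that the complementary modes have nonlocal physical adjoints, and I claim $O$ commutes with every $c_\rho$ and every $c^\dagger_\rho$. Since $O$ and $O^\dagger$ are both local to $A$ they commute with all $A'$-mode operators (an even product of $A$-operators commutes with a single $A'$-operator); writing $O^\dagger = \eta O \eta^{-1}$ and conjugating $[O^\dagger, a_j]=0$ $(j\in A')$ by $\eta^{-1}$ gives $[O, a(Me_j)]=0$, and removing the $A'$-part leaves $[O, a(v)]=0$ for every $v$ in the column space of $M^{AA'}=(M^{A'A})^\dagger$, i.e. $v\in\mathcal{K}_A^{\perp}$; as $M^{AA}u^\rho\in\mathcal{K}_A^\perp$, this is $[O,c_\rho]=0$. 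Running the same argument with $a_j^\dagger$ lands at $[O, a^\dagger(u)]=0$ for $u\in P_A M^{-1}\mathcal{K}_A^{\perp}$, with $P_A$ the projection onto $\mathbb{C}^A$; here a Schur-complement identity, using $\mathrm{range}\!\big(M^{AA'}(M^{A'A'})^{-1}M^{A'A}\big)\subseteq \mathcal{K}_A^{\perp}$, shows this space equals $(M^{AA})^{-1}\mathcal{K}_A^{\perp}$, which is precisely the $M$-orthogonal complement spanned by the $u^\rho$, giving $[O,c^\dagger_\rho]=0$.

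With both commutators in hand, the graded commutators $[c_\rho,\cdot\}$ and $[c^\dagger_\rho,\cdot\}$ act as the odd derivations $\partial/\partial c^\dagger_\rho$ and $\partial/\partial c_\rho$ on the CAR algebra, and their simultaneous vanishing on the even $O$ forces $O$ to contain neither $c^\dagger$ nor $c$, so $O$ lies in the algebra generated by $\{b^\dagger_\mu,b_\mu\}$; expanding there and imposing $O^\sharp=O$ together with evenness produces \cref{GeneralBKQH} with the stated coefficient conditions. For the existence criterion \cref{Kernel equation} I would apply this same characterization to subsets: observables local to $S$ are built from $\ker M^{S'S}$, one checks $\ker M^{S'S}\subseteq\ker M^{A'A}$, and the quadratic observable $a^\dagger(w)\,a(M^{AA}w)$ is local to $S$ iff $w\in\ker M^{S'S}$. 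If $\dim\ker M^{S'S}=K(A)$ for some proper $S$ the kernels coincide and every local-to-$A$ observable is already local to $S$, so none is extensively local; if the inequality is strict for all proper $S$, then $\ker M^{A'A}$ is not covered by the finitely many proper subspaces $\ker M^{S'S}$, and any $w^*$ outside their union gives the extensively local observable $a^\dagger(w^*)a(M^{AA}w^*)$. I expect the Schur-complement identification of the complementary creation directions to be the main obstacle, as it is the one place where the asymmetry between $\eta$ acting on creation versus annihilation operators (the $M$ versus $M^{-1}$) genuinely enters.
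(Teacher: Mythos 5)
Your proposal is correct, and it establishes the theorem by a route that is genuinely different from the paper's. The paper expands a Bravyi--Kitaev local observable $O$ in monomials $\prod a^\dag(f^\mu)\prod a(g^\nu)$ with linearly independent vectors supported on $A$, conjugates $O$ itself by $\eta$, and reads off from the locality of $O^\dag=\eta O\eta^{-1}$ the conditions $f^\mu\in\ker M^{A'A}$ and $g^\nu\in\ker(M^{-1})^{A'A}$, then invokes the block-inversion identity $\ker(M^{-1})^{A'A}=M^{AA}\,\ker M^{A'A}$ to reach the stated form. You instead conjugate the \emph{complementary} modes: you build an adapted CAR system ($b$'s from $\ker M^{A'A}$, $c$'s from its $M^{AA}$-orthogonal complement, which is legitimate since the principal block $M^{AA}$ is positive definite), show from $[O^\dag,a_j]=[O^\dag,a_j^\dag]=0$ for $j\in A'$ together with the Schur-complement identification of $\mathrm{range}\,(M^{-1})^{AA'}=(M^{AA})^{-1}\mathcal{K}_A^\perp$ that $O$ commutes with every $c_\rho$ and $c_\rho^\dag$, and then use the odd-derivation argument to conclude $O$ lies in the subalgebra generated by the $b$'s alone. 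What your route buys is rigor exactly where the paper is terse: the paper's step ``observing that $O^\dag$ is also Bravyi--Kitaev local to $A$ results in $f^\mu\in\ker M^{A'A}$'' silently requires a linear-independence argument that is never spelled out, whereas your commutant-plus-derivation mechanism avoids it entirely; likewise your treatment of the existence criterion (observables local to $S$ correspond exactly to $\ker M^{S'S}$, which embeds in $\ker M^{A'A}$, plus the fact that a complex vector space is never a finite union of proper subspaces) is a rigorous version of the paper's one-sentence intuitive justification of \cref{Kernel equation}. Two cosmetic repairs: the space you denote $P_AM^{-1}\mathcal{K}_A^\perp$ should read $P_AM^{-1}\mathbb{C}^{A'}$, i.e.\ the projected column space of $(M^{-1})^{AA'}$ (your Schur computation identifies the correct space, so nothing downstream changes); and the constraint $O_{S_1S_2}=O^*_{S_2S_1}$ holds without extra signs only after fixing opposite ordering conventions on the $S_1$- and $S_2$-products, a bookkeeping point the paper also passes over.
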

\begin{proof}
Let $O_A$ denote a nonzero operator in $\text{End}(\mathcal{H}_{\mathbbm{P}([n])})$ which is Bravyi-Kitaev local to subsystem $A$. Using \cref{OperatorSpan}, for every $O_A$, there exists linearly independent sets of vectors $\{f^\mu| \mu \in [\mathcal{F}]\}, \{g^\nu|\nu \in [\mathcal{[G]}]\}\subset \mathbbm{C}^n$ satisfying $f^\mu,g^\nu \in \text{span}\{e_i:i\in A\}$ such that
\begin{align}
O_A &= \sum_{S_1 \in \mathbbm{P}([\mathcal{F}])}\sum_{S_2 \in \mathbbm{P}(\mathcal{G})} O_{S_1 S_2} \left(\prod_{\mu \in S_1} a^\dag(f^\mu)\right) \left( \prod_{\nu \in S_2} a(g^\nu) \right)
, \label{sum}
\end{align}
where $O_{S_1 S_2} \in \mathbbm{C}$. Without loss of generality, $f^\mu$ and $g^\nu$ can be chosen such that every such vector appears in the sum in \cref{sum} at least once. The quasi-Hermiticity condition, \cref{operator}, applied to $O_A$ implies
\begin{align}
O^\dag &= \eta O \eta^{-1} \\
&= \sum_{S_1 \in \mathbbm{P}([\mathcal{F}])}\sum_{S_2 \in \mathbbm{P}(\mathcal{[G]})} O_{S_1 S_2} \left( \prod_{\mu \in S_1} a^\dag(M f^\mu) \right) \left( \prod_{\nu \in S_2} a(M^{-1} g^\nu) \right).
\end{align}
Observing that $O^\dag$ is also Bravyi-Kitaev local to $A$ results in the following vector identities 
\begin{equation}
f^\mu \in \ker M^{A' A},\quad g^\mu \in \ker {M^{-1}}^{A' A}.
\end{equation}
Note $2\times 2$ matrix inversion results in the following kernel identity,
\begin{equation}
\ker {M^{-1}}^{A' A} = M^{A A}( \ker M^{A' A}).
\end{equation}
Thus, expressing $f,g$ in terms of a basis $\{w^\mu|\mu\in [K(A)]\}$ of $\ker M^{A' A}$, the operator $O_A$ can be re-expressed in the form of \cref{GeneralBKQH}. The constraint $O_{S_1 S_2} = O^*_{S_2 S_1}$ follows from demanding quasi-Hermiticity.

Importantly, note that extensively local reduced observables exist if and only if $K(A) > K(S)$ for all subsets $S \subset A$. Intuitively, the vectors in $\ker M^{S' S}$ correspond to observables local to $S$, so vectors in $\ker M^{A' A}$ which do not belong to any subset $\ker M^{S' S}$ can not correspond to an observable local to any subset $S$, thus, their corresponding observables are extensively local to $A$.
\end{proof}

Due to the potentially nonlocal string of $Z$ factors in the Jordan Wigner transform, an observable of the form \cref{proj_obs} is not necessarily local in the sense given by a tensor product structure, so \cref{polynomial theorem} is not simple to generalize to the case where locality in free fermion theories is defined via a Jordan-Wigner transform. However, in the case when the subsystem is \textit{connected}, that is, for every $i, j \in A$ and every integer $k$ satisfying $p(i)<p(k)<p(j)$, $k \in A$, the inserted $Z$ factors  in \cref{proj_obs} are local, so such a Bravyi-Kitaev local observable is additionally local under the Jordan-Wigner transform defined by the map $p$. Note for every subsystem $A$ containing Bravyi-Kitaev local observables, there exists a $p$ such that every observable in $A$ is also local with respect to a Jordan-Wigner transform. The validity of the \cref{polynomial theorem} in non-connected subsystems and the context of a Jordan-Wigner transform will not be commented on in this report. 

Notice that altering the diagonal entries of the metric bears no impact on the existence of observables local to a subsystem, since the diagonal entries never contribute to $M^{A' A}$.

Simple examples of reduced metrics with an analytical understanding of locality are presented below.

Suppose the reduced metric block reduces to a set $S$, so 
\begin{equation}
M_{ij} = 0 \,\,\,\,\, \forall i \in S, \, \forall j \notin S. \label{M block}
\end{equation}
Then there exists an observable which is both extensively local with respect to any Jordan-Wigner transform, $\iota_p$, and Bravyi-Kitaev local extensively, over subsystem $S$, 
\begin{equation}
\hat{n}_S = \sum_{i \in S} a^\dag_i a_i. \label{how many particles}
\end{equation}
This observable is not quasi-Hermitian if \cref{M block} doesn't hold in $S$. In particular, this implies that diagonal reduced metrics have observables in every subsystem. 


A special case of reduced metrics are those which reduce to $1\times 1$ and $2\times 2$ blocks. Equivalently, there exists an \textit{associated involution}, $f:[n]\rightarrow[n], f \circ f = 1$, such that $M_{i j} \neq 0 \, \Leftrightarrow \, i = f(j) \, \text{or} \, i = j$. These reduced metrics are special since the existence of nontrivial local observables can be read off directly from $f$: 
\begin{corollary} \label{Involution theorem}
Given a reduced metric which decomposes into $1 \times 1$ and $2 \times 2$ blocks, extensively local observables exist in subsystems, $A \subset [n]$, if and only if the subsystem's image under the reduced metric's associated involution, $f$, is itself $f(A) = A$. In addition, the most general Bravyi-Kitaev local observable in this case is
\begin{align}
O = \sum_{{\mtiny S_1,S_2 \in \mathbbm{P}([A])}} O_{S_1 S_2} \left(\prod_{i \in S_1} a^\dag(e_i)\right) \left( \prod_{j \in S_2} a(M^{AA} e_j) \right)\label{General2x2BKQH},
\end{align}
where $e_i \in \mathbbm{C}^n$ is defined in \cref{CnBasis} and $O_{S_1 S_2} = O^*_{S_2 S_1}$. 
\end{corollary}
\begin{proof}
The construction of extensively local observables in subsystems closed under $f$ follows from $M^{A' A} = 0$, so that $\ker M^{A' A} = \text{span}\{e_i :i \in A\}$.

Given a site $i \in A$ whose dual satisfies $f(i) \notin A$, and a vector $v \in \ker M^{A'A}$, the equation $M^{\{f(i)\} A} v = 0$ immediately implies $v_i = 0$, and the involution symmetry of $M$ implies $(M^{AA} v)_i = 0$. As a consequence, any observable of the form \cref{General2x2BKQH} is local to the subsystem $S = A-\{i\}$, and therefore is not extensive.
\end{proof}


The special case of observables localized at a single site is quite simple to analyse.
\begin{corollary} \label{single site theorem}
Quasi-Hermitian observables, with respect to the metric of \cref{reduced metric to metric}, which are extensively local to a single site, $i$, exist if and only if the reduced metric is a block matrix, $M_{ij} = M_i \delta_{ij}$.
\end{corollary}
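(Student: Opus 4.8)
The plan is to specialize \cref{polynomial theorem} to the singleton subsystem $A = \{i\}$, where the combinatorics collapse almost entirely. The only proper subset of $\{i\}$ is $\emptyset$, and $\dim \ker M^{\emptyset' \emptyset} = 0$: the map $M^{\emptyset'\emptyset}$ has domain $\mathrm{span}\{e_j : j \in \emptyset\}$, which is the zero space, so its kernel is trivial. Hence the criterion \eqref{Kernel equation} reduces to the single inequality $K(\{i\}) > 0$, that is, to $\ker M^{\{i\}' \{i\}} \neq \{0\}$. So the first step is just to read off when this one kernel is nontrivial.

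Second, I would identify $M^{\{i\}' \{i\}}$ explicitly. With $A = \{i\}$ and $A' = [n]-\{i\}$, it is the single column $\left(M_{ji}\right)_{j \neq i}$ regarded as a map from the one-dimensional space $\mathrm{span}\{e_i\}$ into $\mathrm{span}\{e_j : j \neq i\}$. A map with one-dimensional domain has nontrivial kernel if and only if it is the zero map, so $K(\{i\}) > 0$ iff $M_{ji} = 0$ for all $j \neq i$. Since $M = M^\dagger$, the vanishing of the $i^{\text{th}}$ column off the diagonal is equivalent to the vanishing of the $i^{\text{th}}$ row off the diagonal. When this holds, $\ker M^{\{i\}'\{i\}}$ is spanned by $w = e_i$ with $M^{AA} w = M_{ii}\, e_i$, and the general form in \cref{polynomial theorem} yields exactly the two-dimensional span $\{\mathbbm{1}, \hat{n}_i\}$ with $\hat{n}_i = a^\dagger_i a_i$; the extensively local observables are precisely those members with nonzero coefficient of $\hat{n}_i$. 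Equivalently, one may bypass the theorem and invoke the discussion around \eqref{M block} with $S = \{i\}$, which asserts that $\hat{n}_{\{i\}}$ is quasi-Hermitian exactly when $M_{ij} = 0$ for all $j \neq i$.

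Finally, I would assemble the global statement: the per-site analysis shows that site $i$ hosts an extensively local observable if and only if the $i^{\text{th}}$ row and column of $M$ are diagonal, and requiring this at every $i \in [n]$ forces all off-diagonal entries of $M$ to vanish, giving $M_{ij} = M_i \delta_{ij}$ with $M_i := M_{ii}$; conversely a diagonal $M$ makes each $M^{\{i\}'\{i\}}$ vanish, so every $\hat{n}_i$ is a valid observable. The main obstacle is conceptual rather than computational: one must handle the degenerate edge case $\dim\ker M^{\emptyset'\emptyset} = 0$ correctly, and one must notice that a \emph{fixed} site constrains only its own row and column, so that full diagonality of the reduced metric emerges only after quantifying over all single sites. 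The Hermiticity of $M$ is what ties the row condition to the column condition and keeps this bookkeeping consistent.
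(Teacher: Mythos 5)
Your proposal is correct and follows essentially the same route as the paper: the paper's proof likewise specializes \cref{polynomial theorem} to the singleton $A=\{i\}$, noting that $\ker M^{A'A}$ is nontrivial precisely when the off-diagonal entries $M_{ji}$, $j\neq i$, vanish, and that in that case $\hat{n}_{\{i\}}=a^\dag_i a_i$ is the (extensively) local observable. Your version merely fills in details the paper leaves implicit — the trivial kernel for the empty subset, the Hermiticity link between row and column vanishing, and the explicit two-dimensional span $\{\mathbbm{1},\hat{n}_i\}$ — so there is no substantive difference in approach.
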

\begin{proof}
If $M$ block reduces, $\hat{n}_{\{i\}}$ is an observable. If $M$ doesn't block reduce, $\ker M^{A' A} = \emptyset$, and there are no observables.
\end{proof}

\subsubsection{Application to toy models} \label{ApplicationOfTheorems}

The theorems of the last section are readily applied to the toy models introduced in \cref{toy model}. 

\paragraph{Nearest neighbor impurities ($n=2m$)}

Corollary~\eqref{Involution theorem} is quite strong in the case of nearest neighbour impurities, $n = 2m$, where the metric of \cref{badass metric} block decomposes into Parity sectors. 
Thus, for the metric of \cref{badass metric} with either $\text{Im} \gamma \neq 0$ or $\beta \neq 0$, \textit{extensively local observables exist in and only in Parity symmetric subsystems.} In the case of $\text{Im} \gamma = \beta = 0$, there are extensively local observables in \textit{every} subsystem, since the metric is diagonal in this case.
It appears the off-diagonal elements of the Hamiltonian are irrelevant in determining which subsystems contain local observables.

\paragraph{Farthest Impurities $(m = 1, \text{Im} \gamma \neq 0)$}

This section will assume the choice \cref{not positive} for the metric. This metric is only positive definite on a portion of $\mathcal{PT}$-unbroken region, demonstrated in \cite{PTModels}, so it's unclear whether there exists a metric with the same properties concerning the existence of local observables in the case where $\gamma$ is not in this region.

Theorem~\eqref{polynomial theorem} mandates calculating kernels of blocks of the reduced metric, $\ker M^{A' A}$. 

For this model, the existence of observables local to a subsystem is related to whether the subsystem is connected. Some related notation is defined in the following paragraph:

Consider the graph $G_A = (A, E_A)$ with vertices $A$ and edges $E_A = \{(i, i+1): i, i+1 \in A\}$. Let $\mathfrak{C}_A$ denote the set of connected components of $G_A$. A distance between components, $d_A:\mathfrak{C}_A \times \mathfrak{C}_A: \rightarrow \mathbbm{R}$, is defined as
\begin{align}
d_A(C_1, C_2) = \min \left\{d(i,j):i \in C_1, j \in C_2 \right\},
\end{align}
where $d$ is the geodesic distance in $G_{[n]}$. Intuitively, $d_A$ measures the number of sites between $C_1$ and $C_2$. Next, denote the \textit{leftmost}, $C_L$, and \textit{rightmost}, $C_R$, or collectively \textit{edge} components of $G_A$ to be the connected components containing $\min A$ and $\max A$ respectively.
Lastly, define
\begin{align}
A_{<i} &= \{k<i:i \in A\}, \\ A_{>i} &= \{k>i:i\in A\}, \label{A<} \\
v_{<i} &= \sum_{j<i} v_j e_j, \\ v_{>i} &= \sum_{j>i} v_j e_j, \label{v<}
\end{align} 
where the sums are set to zero if they sum over an empty set.

When $\gamma^* \gamma = 1$, the extensively local observables are comparatively simple to construct.

\begin{proposition} \label{UnitDiskLocality}
For quasi-Hermitian theories with respect to the metric of \cref{not positive} and $\gamma^* \gamma = 1$, extensively local observables in subsystem $A$ exist if and only if either $G_A$ contains no connected components with exactly one site or $A = \{1,n\} \cup B$, where $B$ contains no connected components with exactly one site.
\end{proposition}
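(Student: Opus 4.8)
The plan is to run everything through \cref{polynomial theorem}, which says extensively local observables exist in $A$ exactly when $K(A):=\dim\ker M^{A'A}$ strictly exceeds $K(S)$ for every proper $S\subset A$. The whole proposition is therefore an explicit evaluation of the integer $K(A)$ for the metric \cref{not positive} on the unit circle, followed by a combinatorial comparison. I expect to establish the closed form
\begin{equation}
K(A) = |E_A| + \epsilon_A, \qquad \epsilon_A := \begin{cases} 1 & \{1,n\}\subseteq A \\ 0 & \text{otherwise} \end{cases},
\end{equation}
where $E_A$ is the edge set of $G_A$, so $|E_A|=|A|-|\mathfrak{C}_A|$ counts adjacent pairs in $A$. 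Before starting I would record that positive-definiteness of $M$ forces $\cos(\arg\gamma)\neq 0$: the upper-left $2\times2$ minor of \cref{not positive} equals $\cos^2(\arg\gamma)$, so $\gamma=\pm i$ is excluded, and this is precisely the regime in which the count below holds.

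To obtain the formula I would write $M^{A'A}v=0$ row by row. For $i\in A'$ the $i$-th row is $\sum_{j\in A}M_{ij}v_j=0$; after substituting \cref{not positive}, cancelling the common nonzero scalar (proportional to $\mathrm{Im}\,\gamma$, which is nonzero since $\gamma\notin\mathbbm{R}$), and using $\gamma^{-1}=\gamma^*$ to extract a factor $\gamma^{i}$, the equation collapses to $\gamma^* L_i=\gamma R_i$ in the rescaled (invertibly substituted) variables $u_j:=(\gamma^*)^j v_j$, where $L_i=\sum_{j\in A,\,j<i}u_j$ and $R_i=\sum_{j\in A,\,j>i}u_j$. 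The key structural point is that $L_i,R_i$ depend only on which gap between connected components the index $i$ occupies. Writing $\mathfrak{C}_A=\{C_1,\dots,C_p\}$ left to right, $U_s:=\sum_{j\in C_s}u_j$ and prefix sums $P_r=U_1+\cdots+U_r$, each internal gap contributes the single equation $(\gamma+\gamma^*)P_r=\gamma P_p$, while any gap left of $\min A$ or right of $\max A$ contributes $P_p=0$. The hypothesis $|\gamma|=1$ is essential here, since it is what makes the $\gamma^{i}$ factorization exact.

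I would then solve this small system in the $U_s$. If $A$ has a boundary gap ($1\notin A$ or $n\notin A$) the equations force $U_1=\cdots=U_p=0$, i.e.\ $p$ independent constraints on disjoint variable groups, giving $K(A)=|A|-p=|E_A|$. If instead $\{1,n\}\subseteq A$ there is no boundary gap; subtracting consecutive internal-gap equations yields $U_2=\cdots=U_{p-1}=0$ together with the lone coupling $\gamma^*U_1=\gamma U_p$ (using the identity $(\gamma+\gamma^*)-\gamma=\gamma^*$), hence $p-1$ constraints and $K(A)=|A|-p+1=|E_A|+1$. Combining the cases gives the boxed form. Because $K(A)=|E_A|+\epsilon_A$ is monotone under inclusion, the condition $K(A)>K(S)$ for all proper $S$ reduces to the maximal subsets $S=A\setminus\{i\}$, and a one-line computation gives $K(A)-K(A\setminus\{i\})=\deg_{G_A}(i)+\delta_i$, where $\delta_i=1$ iff $i\in\{1,n\}$ and $\{1,n\}\subseteq A$. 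This difference vanishes exactly when $i$ is an isolated vertex of $G_A$ (a one-site component) that is not a boundary site protected by $\{1,n\}\subseteq A$, so no such $i$ exists precisely when $A$ meets the stated dichotomy.

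The main obstacle is the boundary bookkeeping around the sites $1$ and $n$, which enter only through the absence of the two boundary gaps; it is the $|\gamma|=1$ identity that makes the left and right boundary constraints coincide and renders the coupling $\gamma^*U_1=\gamma U_p$ consistent rather than over-determined (for $\{1,n\}\subseteq A$ with $p=2$ it is the single surviving constraint). I would also take care in matching my criterion to the proposition's phrasing: in the clause $A=\{1,n\}\cup B$ the set $B$ must be allowed to overlap $\{1,n\}$, so that, for example, $A=\{1,3,4\}$ with $n=4$ qualifies via $B=\{3,4\}$. With that reading, my criterion ``$G_A$ has no one-site component, or $\{1,n\}\subseteq A$ and every one-site component lies in $\{1,n\}$'' is equivalent to the two listed alternatives, which I would confirm by the short argument that a one-site component $i\notin\{1,n\}$ persists in any admissible $B$ and, conversely, that removing isolated boundary points leaves $B$ free of one-site components.
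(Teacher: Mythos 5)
Your proposal is correct, but it is organized quite differently from the paper's proof, and it actually proves more. The paper also works through \cref{polynomial theorem}, but it never computes $K(A)$ in closed form: for necessity it examines only the two kernel rows adjacent to an isolated site $i$ (a $2\times 3$ or $2\times 2$ system), concluding that any $v\in\ker M^{A'A}$ has $v_i=0$ and restricts to an element of $\ker M^{S'S}$ with $S=A-\{i\}$, so the kernel cannot grow; for sufficiency it computes kernels only for connected subsets ($K(C)=|C|-1$, since with $|\gamma|=1$ the left and right rows of $M^{C'C}$ are all proportional) and for $\{1,n\}$ (exhibiting $(1,\gamma^{n-3})^\intercal$), and then glues these using the union property of extensively local observables stated in the introduction. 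Your route instead derives the exact formula
\begin{equation}
K(A)=|E_A|+\epsilon_A,\qquad \epsilon_A=\begin{cases}1 & \{1,n\}\subseteq A\\ 0 & \text{otherwise}\end{cases}
\end{equation}
for \emph{every} subsystem, by grouping the kernel rows into gaps (your observation that all rows within one maximal run of $A'$ impose the same equation is exactly the global version of the proportionality the paper exploits locally), and then finishes with a purely combinatorial monotonicity-plus-difference argument, $K(A)-K(A\setminus\{i\})=\deg_{G_A}(i)+\delta_i$. This buys a uniform treatment of both directions, a quantitative count of the local observables in each subsystem, and it makes explicit the nondegeneracy requirement $\gamma\neq\pm i$ (equivalently $\gamma+\gamma^*\neq 0$, forced by positive definiteness of the metric of \cref{not positive}), which the paper's proof uses silently when it divides by determinants proportional to $1+\gamma^2$. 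The cost is more setup; the paper's argument is shorter because the sufficiency half leans on the general union property rather than on a rank count. Your closing remark on reading $A=\{1,n\}\cup B$ with $B$ allowed to overlap $\{1,n\}$ is also the reading consistent with the paper's own construction, so your criterion matches the stated dichotomy.
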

\begin{proof}
Suppose $A$ contains a connected component with exactly one site, $i$. Suppose there exists $v \in K(A)$. Assuming $i \notin \{1,n\}$, the kernel equation, $\sum_j M^{A' A}_{ij} v_j = 0$, for indices $i-1,i+1$ is
\begin{align}
\begin{pmatrix}
\gamma^{-2} & -1 & {\gamma^*}^{2}  \\
1 & 1 & 1
\end{pmatrix} \begin{pmatrix}
M^{\{i+1\} A_{<i}}\, v_{<i} \\
i \text{Im} \gamma\, v_i \\
M^{\{i+1\} A_{>i}}\, v_{>i}
\end{pmatrix} = 0.
\end{align}

For the second case of the proposition, suppose $i = 1 \in A$, but $2, n \notin A$, $n>3$ (the cases $n = 2,3$ are trivial and follow from \cref{single site theorem}).  The other case, $i = n \in A, 1, n-1 \notin A$, follows from $\mathcal{PT}-$symmetry. Then the 
kernel equations at sites $2, n$ are
\begin{equation}
\begin{pmatrix}
1 & 1 \\
\gamma^{n-2} & -(\gamma^{*})^{4-n}
\end{pmatrix}
\begin{pmatrix}
i \text{Im} \gamma\, v_1 \\
M^{\{2\} A_{>1}}\, v_{>1}
\end{pmatrix} = 0.
\end{equation}

In all cases mentioned above, since $\gamma^* \gamma = 1$, these equations imply $v_i = 0$. In addition, note $M^{\{i\} A-\{i\}} v= 0$ since
\begin{align}
M^{\{i\}\, A-\{i\}} v = \gamma^{-1} M^{\{i+1\} A_{<i}}\, v_{<i} + \gamma^{*} M^{\{i+1\} A_{>i}}\, v_{>i} = 0.
\end{align}

Thus, $M^{S' S} v = 0$ for $S = A - \{i\}$. Thus, by \cref{polynomial theorem}, if either $A$ has a single-site connected component between the endpoints of the lattice, or exactly one of $1,n$ is in $A$, no extensively local observables exist in $A$.

The converse follows from explicit construction of extensively local observables. If $C$ is a connected subset of $A$, then $\ker M^{C' C} = ((1, \gamma^*, \dots {\gamma^*}^{|C|})^\dag)^\perp$, so $K(C) = |C|-1$. If $C = \{1, n\}$, then $\left(1, \gamma^{n-3} \right)^\intercal \in \ker M^{C' C}$. As a consequence, extensively local observables exist in every connected subset of $[n]$, as well as the subset $\{1,n\}$. Taking suitable linear combinations of the above vectors demonstrates the existence of extensively local observables in the subsystem $A = \{1,n\} \cup B$, where $B$ is a union of connected components with at least two sites.
\end{proof}

The remainder of the section is devoted to the case $\gamma^* \gamma \neq 1$. The final result is summarized in proposition~\eqref{m=1Final}. Some examples of subsystems containing local observables are shown in \cref{ExampleRegions}. 

\begin{figure}[h!]
\centering
\includegraphics[width = 120mm]{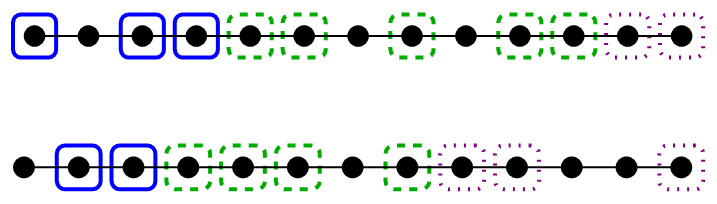}
\caption{An example of an $n = 13$ chain, depicted in black, with non-Hermitian impurities ($\gamma, \gamma^{*}$) at $(m,\bar{m}) = (1,13)$ with $|\gamma| \neq 1$ and $\gamma \not\in \mathbbm{R}$, and hopping amplitudes $t_i = 1$. The top chain demonstrates three subsystems, shown in different colors and line styles, which contain extensively local quasi-Hermitian observables with respect to the metric of \protect\cref{not positive}. The bottom chain shows three subsystems which do not contain extensively local observables. Notably, the subsystem marked with a green, dashed line in the bottom chain does have local observables, but they are also local to the collection of its leftmost three sites. In addition, the solid blue and dashed green subsystems in the top chain only contain extensively local observables if $|\gamma| \neq 1$.}
\label{ExampleRegions}
\end{figure}

To simplify the analysis, we start with the special case where the subsystem is connected.

\begin{lemma} \label{ConnectedRegions}
For a quasi-Hermitian model with the metric of \cref{not positive} for $\gamma^* \gamma \neq 1$,
every subset $C \subset [n]$ such that $G_C$ is connected with at least three sites contains Bravyi-Kitaev extensively local observables. In addition, $\{1,2\}$ and $\{n-1,n\}$ contain Bravyi-Kitaev extensively local observables. No other connected subgraph contains Bravyi-Kitaev  local observables. 
\end{lemma}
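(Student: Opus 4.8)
The plan is to run everything through the kernel criterion of \cref{polynomial theorem}: a connected set $C$ carries a nontrivial Bravyi--Kitaev local observable exactly when $K(C)=\dim\ker M^{C'C}\ge 1$, and an \emph{extensively} local one exactly when $K(C)>K(S)$ for every proper $S\subset C$. So the whole statement reduces to computing $\dim\ker M^{A'A}$ for the metric of \cref{not positive}, under the standing hypothesis $\text{Im}\,\gamma\neq 0$ of this subsection. First I would record the external-row equations for an arbitrary subset $S$: a vector $v$ supported on $S$ lies in $\ker M^{S'S}$ iff $\sum_{j\in S}M_{ij}v_j=0$ for each $i\notin S$. Because of the geometric structure of \cref{not positive}, every row $i<\min S$ has all its entries proportional to $(\gamma^{*})^{j-i-1}$, so after dividing by the common factor it collapses to the single \emph{left} equation $\sum_{j\in S}(\gamma^{*})^{j}v_j=0$; symmetrically every row $i>\max S$ collapses to the single \emph{right} equation $\sum_{j\in S}\gamma^{-j}v_j=0$. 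The one genuinely algebraic input, and the precise place where $\gamma^{*}\gamma\neq 1$ is used, is that these two functionals are linearly independent as soon as $|S|\ge 2$, since their coefficient vectors $((\gamma^{*})^{j})$ and $(\gamma^{-j})$ are proportional iff $(\gamma^{*}\gamma)^{j}$ is constant on $S$, i.e.\ iff $|\gamma|=1$.

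For a connected $C$ the external rows are exhausted by $i<\min C$ and $i>\max C$, so these boundary equations are the \emph{only} constraints and $K(C)=|C|-r_C$ exactly, where $r_C\in\{0,1,2\}$ counts how many of the endpoints $\{1,n\}$ the set $C$ misses. This immediately gives four cases: $C$ interior ($r_C=2$) has $K(C)=|C|-2$; $C$ meeting exactly one endpoint ($r_C=1$) has $K(C)=|C|-1$; and $C=[n]$ ($r_C=0$) has $K(C)=n$. Reading off $K(C)\ge 1$ yields the classification claimed: interior $C$ carries observables iff $|C|\ge 3$, a one-endpoint $C$ iff $|C|\ge 2$, and $C=[n]$ always. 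Among connected sets this leaves exactly the single sites ($K=0$) and the interior pairs $\{a,a+1\}$ with $2\le a\le n-2$ ($r_C=2$, $K=0$) carrying no observable, which is the ``no other connected subgraph'' assertion, while the only two-site sets that survive are $\{1,2\}$ and $\{n-1,n\}$; the single-site computation also recovers \cref{single site theorem}.

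It remains to promote existence to extensivity, i.e.\ $K(C)>K(S)$ for all proper $S\subset C$. Here the efficient point is that the two boundary equations alone bound $K(S)$ from above for \emph{arbitrary} (possibly disconnected) $S$, since $\ker M^{S'S}$ lies inside the solution space of just those equations and the extra internal-gap equations appearing for disconnected $S$ can only shrink the kernel; thus no gap analysis is needed. Writing $r_S$ for the number of endpoints $S$ misses, one gets $K(S)\le |S|-r_S$ for $|S|\ge 2$ and $K(S)=0$ for $|S|\le 1$. Since $S\subseteq C$ misses every endpoint $C$ misses, $r_S\ge r_C$, and a proper subset obeys $|S|\le |C|-1$; combining, $K(S)\le(|C|-1)-r_C=K(C)-1$ when $|S|\ge 2$, and $K(S)=0<K(C)$ when $|S|\le 1$, so extensivity holds in every case. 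Finally, the parity symmetry $i\mapsto n-i+1$ of $M$ lets the right-endpoint cases be mirrored from the left-endpoint ones, exactly as in \cref{UnitDiskLocality}. The main obstacle is conceptual rather than computational: recognizing that the metric's geometric form collapses all external rows to two functionals, and that the $|\gamma|\neq 1$ hypothesis is exactly what makes those two independent --- this independence is what separates the present regime from the unit-circle case of \cref{UnitDiskLocality}, where the two functionals coincide and single-site components can be pruned.
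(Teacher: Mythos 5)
Your proposal is correct and follows essentially the same route as the paper's proof: apply the kernel criterion of \cref{polynomial theorem}, observe that all rows of $M^{C'C}$ exterior to a connected $C$ collapse (up to scale) to at most two geometric functionals, count $K(C)=|C|-2+1_C(\{1\})+1_C(\{n\})$, and conclude extensivity because passing to any proper subset can only lower the kernel dimension. Your write-up is in fact slightly more careful than the paper's at two points the paper leaves implicit --- the linear independence of the two boundary functionals is exactly where $\gamma^*\gamma\neq 1$ enters, and the monotonicity bound $K(S)\le |S|-r_S$ justifies the paper's one-line extensivity claim --- but these are refinements of the same argument, not a different one.
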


\begin{proof} 

By \cref{single site theorem}, if $|C| = 1$, there are no nontrivial local observables.

In the case of connected subsystems, $\dim \ker M^{C' C}$ is easy to find, since the rows labelled by indices to the left of $C$ are all multiples of each other, and similarly for indices to the right of $C$. Note the set of rows of $M^{C' C}$ to the left or right of $C$ doesn't exist if either $1 \in C$ or $n \in C$, so in these cases, $K(C)$ increases by one. Thus,
\begin{align}
\ker M^{C' C} &= \text{span} \left\{\begin{array}{l}
1_{C'}(\{1\})(1, \gamma^*, \dots, {\gamma^*}^{|C|})^\dag,
1_{C'}(\{n\})({\gamma}^{|C|}, \dots, \gamma, 1)^\dag
\end{array} \right\}^\perp,\\
K(C) &= |C|-2+1_A(\{1\})+1_A(\{n\}),
\end{align}
where $1_S:\mathbbm{P}([n])\rightarrow \mathbbm{P}([n])$ is the indicator function,
\begin{align}
1_S(T) = 1-\delta_{T\cap S \,\emptyset}.
\end{align}
$K(C)$ is nonzero if and only if $|C| \geq 3$, $C = \{1,2\}$, or $C = \{n-1,n\}$, proving that these subsystems are the only connected subsystems with local observables.

Note that removing any number of sites from $C$ necessarily reduces $K(C)$, so the subsystem $C$ also contains extensively local observables. 
\end{proof}

If a subsystem is a union of disjoint connected subsystems of the form above, there are observables which are extensively local to said subsystem. It only remains to check subsystems which have an isolated site or pair of sites.

\begin{lemma} \label{conds}
Bravyi-Kitaev observables which are extensively local to a subsystem, $A \subseteq [n]$, and quasi-Hermitian under the metric of \cref{not positive} exist only if the following conditions on its connected components, $A = \cup \mathfrak{C}_A$, are met:

\begin{enumerate} 
\item If $A$ contains a connected component, $C \in \mathfrak{C}_A$ with $|C|\leq2$, and $C \notin \{\{1,2\},\, \{n-1,n\}\}$, then $A$ must contain at least one more connected component, so $A \neq C$. 
\item For every connected component with a single site, $C = \{i\}$, then $i-2, i+2 \in A$ when $i-2, i+2 \in [n]$. Intuitively, this connected component is separated from the rest of the subsystem by at most one site from both the left and the right. \label{isolated}
\item The connected components with two sites, $|C| = 2$, satisfy $\min_{C' \in \mathfrak{C}_{A-C}} d_A(C, C') \leq 2,$ unless $C = \{1,2\}$ or $C = \{n-1,n\}$. Intuitively, this connected component is separated from the rest of the subsystem by at most one site from either the left or the right. 
\item The edge components may have a single site only if that component is $C_L = \{1\}$ or $C_R = \{n\}$. Otherwise, $|C_{L,R}| > 1$.
\end{enumerate}

The set of all subsets $A \subseteq [n]$ satisfying the above criteria will be referred to as $\mathcal{R}$.
\end{lemma}
\begin{proof}

\leavevmode
\begin{enumerate}
\item Trivial consequence of lemma~\eqref{ConnectedRegions}
\item Assume $\{i\}$ is a connected component of $A$. Consider the kernel conditions $\left(M^{A' A} v\right)_j = 0$ for the following choices of $j \in \{i-2,i-1,i+1,i+2\}\cap [n]$, using the notation of \cref{A<,v<}, 

\begin{align}
V^\intercal &:= \left(M^{\{i-1\} A_{<i}}\, v_{<i}, i \text{Im} \gamma\, v_i, M^{\{i-1\} A_{>i}}\, v_{>i} \right)\\
V &\in X := \text{span}\left\{\begin{array}{l}\arraycolsep=1.4pt\def\arraystretch{1.8} 1_{A'}(\{i-2\})\left(\gamma^{-1},-{\gamma}^*,\gamma^*\right)^\dag\\1_{A'}(\{i-1\})\left(1, -1, 1\right)^\dag,\\
1_{A'}(\{i+1\})\left(\gamma^2, 1, (\gamma^{*})^{-2} \right)^\dag\\
1_{A'}(\{i+2\})\left(\gamma^3,\gamma,(\gamma^*)^{-3} \right)^\dag\end{array}\right\}^\perp
\end{align}

For a nontrivial solution $V$ to exist, $\dim X \leq 2$, which only happens if either $\gamma^* \gamma = 1$ or $i-2,i+2\in A$ when $i-2,i+2\in [n]$. 

If all vectors $v \in \ker M^{A' A}$ satisfy $V = 0$, then  $\ker M^{A' A} \subset \ker M^{S' S}$ for $S = A - \{i\}$. Therefore, there is no extensively local observable in $A$ such a case, and extensively local observables exist if and only if criteria~\eqref{isolated} is satisfied.
\item Suppose $i, i+1 \in A$, and assume $n>4$, since $n=4$ reduces to \cref{ConnectedRegions}. 
Consider the kernel conditions $(M^{A'A}v)_j = 0$ for the choices of $j\in \{i-2, i-1, i+2, i+3\}\cap [n]$:

\begin{align}
V^\intercal &:= \left(M^{\{i-2\} A_{<i}} v_{A_{<i}},
i \text{Im} \gamma v_i,
i \text{Im} \gamma v_{i+1},
M^{\{i-2\} A_{>i+1}} v_{A_{>i+1}}\right), \\
V \in X &:=\text{span} \left\{
\begin{array}{l}
1_{A'}(\{i-2\})\left(1, -\gamma^*, -{\gamma^*}^2, 1\right)^\dag \\
1_{A'}(\{i-1\})\left(\gamma, -1, -\gamma^{*}, {\gamma^{*}}^{-1}\right)^\dag \\
1_{A'}(\{i+2\})\left(\gamma^4, \gamma, 1, {\gamma^{*}}^{-4}\right)^\dag \\
1_{A'}(\{i+3\})\left(\gamma^5, \gamma^2, \gamma, {\gamma^{*}}^{-5}\right)^\dag
\end{array} \right\}^\perp. \label{|C|=2}
\end{align}
For $V$ to be nontrivial, $\dim X \leq 3$. This only happens when $|\gamma| = 1$, or $i-2\in A$ when $i-2 \in [n]$, or $i+3 \in A$ when $i+3 \in [n]$. The same logic from the proof of criteria~\eqref{isolated} demonstrates that extensively local observables exist only if a nontrivial $V$ exists, proving this case.

\item Suppose $i$ is the leftmost site, $i-1 \in [n]$, and $i+1 \in A'$. Then \begin{align}
\begin{pmatrix}
-1 & 1 \\
1 & {\gamma^{*}}^{-2}
\end{pmatrix}
\begin{pmatrix}
i \text{Im} \gamma v_i\\
M^{\{i-1\} A_{>i}} v_{>i}
\end{pmatrix} = 0.
\end{align}
Thus, $v_i = M^{\{i-1\} A_{>i}} v_{>i} = 0$, so $M^{\{i\} A} v = 0$. Consequently, $\ker M^{A' A} \subset \ker{M^{S' S}}$ with $S = A - \{i\}$, so by \cref{polynomial theorem}, there are no extensively local observables in $A$. The case where $i$ is the rightmost site follows from $\mathcal{PT}$ symmetry.
\end{enumerate}
\end{proof}
The remainder of this section is dedicated to showing that subsystems $A$ satisfying the enumerated criteria of lemma~\eqref{conds}, $A \in \mathcal{R}$, do contain extensively local observables.

\begin{proposition} \label{m=1Final}
Bravyi-Kitaev extensively local observables to a subsystem $A=\cup \mathfrak{C}_A \subseteq [n]$, which are quasi-Hermitian under the metric of \cref{not positive}, exist if and only if the conditions of lemma~\eqref{conds} are met.
\end{proposition}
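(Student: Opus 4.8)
The necessity of the four conditions is precisely the content of lemma~\eqref{conds}, so the plan is to prove their sufficiency: for every $A\in\mathcal{R}$ I would produce Bravyi-Kitaev extensively local observables. By \cref{polynomial theorem} it is enough to show $K(A)>K(S)$ for every proper $S\subset A$. Because $A'\subseteq S'$ whenever $S\subseteq A$, a kernel vector of $M^{S'S}$ extended by zero to $A$ still satisfies $M^{A'A}v=0$, so $\ker M^{S'S}\subseteq\ker M^{A'A}$ and hence $K(S)\le K(A)$; moreover every proper $S$ lies in some $A\setminus\{i\}$. It therefore suffices to prove the single clean statement that for each $i\in A$ there is $v\in\ker M^{A'A}$ with $v_i\neq0$, since then $\{v\in\ker M^{A'A}:v_i=0\}$ is a proper subspace containing $\ker M^{(A\setminus\{i\})'(A\setminus\{i\})}$, forcing $K(A\setminus\{i\})<K(A)$. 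Equivalently, no coordinate functional $e_i^{*}$ lies in the row span of $M^{A'A}$.

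The structural observation I would exploit is that, up to the nonzero scalar $i\,\text{Im}\,\gamma$, the row $M^{\{k\}A}$ of the metric of \cref{not positive} is piecewise geometric in $j$: a multiple of $u_j:=\gamma^{-j}$ for $j<k$ and a multiple of $w_j:=(\gamma^{*})^{j}$ for $j>k$, with the break at the empty site $k$. Since $\gamma^{*}\gamma\neq1$, the sequences $u$ and $w$ are linearly independent and any nontrivial combination $au_j+bw_j$ vanishes at most once. Each connected component $C_t$ of $A$ is an interval with no element of $A'$ strictly inside it, so every row restricts on $C_t$ to a scalar multiple of $u|_{C_t}$ or of $w|_{C_t}$; hence the restriction of the entire row span to $C_t$ lies in the two-dimensional space $\text{span}\{u|_{C_t},\,w|_{C_t}\}$. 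If $|C_t|\ge3$ then $e_i^{*}|_{C_t}$ vanishes at two or more sites and cannot lie in this span, so no combination of rows can equal $e_i^{*}$: every site of a component of size $\ge3$ is automatically essential. This disposes of all large components globally (and, with \cref{single site theorem}, recovers \cref{ConnectedRegions}).

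What remains are the sites in components of size one or two, for which $e_i^{*}|_{C_t}$ does lie in $\text{span}\{u|_{C_t},w|_{C_t}\}$ and the component-local obstruction disappears. Here I would group the components of $A$ into maximal clusters separated by gaps of width $\ge2$; an elementary computation (the same one driving the cases of lemma~\eqref{conds}) shows that, because $\gamma^{*}\gamma\neq1$, a width-$\ge2$ gap collapses to the two splitting conditions $\sum_{\text{left}}\gamma^{-j}v_j=0$ and $\sum_{\text{right}}(\gamma^{*})^{j}v_j=0$. Invoking $\ker M^{X'X}\subseteq\ker M^{A'A}$ for a cluster $X\subseteq A$ and the remark that a union of subsystems carrying extensively local observables again carries them, it suffices to treat each cluster in isolation. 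Inside a cluster all internal gaps have width one, each contributing a single matching equation $\gamma^{k-1}\sum_{j<k}\gamma^{-j}v_j=(\gamma^{*})^{-k-1}\sum_{j>k}(\gamma^{*})^{j}v_j$, together with at most one boundary condition at each cluster end.

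For each small component I would then build an explicit kernel vector nonvanishing on it: the close-neighbour hypotheses of lemma~\eqref{conds}—a singleton flanked by width-one gaps on both sides, a pair abutting a width-one gap on at least one side, and the boundary exceptions $\{1\},\{n\},\{1,2\},\{n-1,n\}$—are exactly what supply enough adjacent freedom for the one or two matching equations touching the component to be solved with a nonzero value there, the excess being absorbed by the internal degrees of freedom of a neighbouring component of size $\ge3$ (which exist by conditions $1$--$4$). The main obstacle is precisely this step: a finite but delicate case analysis dual to cases $2$--$4$ of lemma~\eqref{conds}, in which one verifies, using the independence of $u$ and $w$ and the exact placement of the breakpoints, that the matching and splitting conditions never conspire to force $v_i=0$ (equivalently, that the overdetermined system expressing $e_i^{*}$ as a combination of rows is inconsistent). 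Carrying this out for the singleton, the interior pair, and the four boundary configurations, and then reassembling clusters and finally all of $A$ by the union remark, establishes sufficiency and completes the proposition.
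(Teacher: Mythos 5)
Your reduction and your treatment of large components are sound, and arguably cleaner than the paper's route: the monotonicity $\ker M^{S'S}\subseteq\ker M^{A'A}$ under extension by zero does reduce \cref{polynomial theorem} to the per-site statement that every $i\in A$ supports some $v\in\ker M^{A'A}$ with $v_i\neq 0$, and your observation that each row of $M^{A'A}$ restricts on a connected component $C_t$ to a multiple of $u|_{C_t}$ or $w|_{C_t}$, combined with the fact that a nontrivial combination $au+bw$ has at most one zero when $\gamma^*\gamma\neq 1$, disposes of every site lying in a component of size at least three in one stroke (the paper instead needs lemma~\eqref{ConnectedRegions} together with an inductive decomposition of $A$ to reach the same conclusion).

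The genuine gap is exactly where you flag ``the main obstacle'': components of size one and two. Your proposed mechanism---solving the matching equations at width-one gaps with ``the excess being absorbed by the internal degrees of freedom of a neighbouring component of size $\ge 3$ (which exist by conditions $1$--$4$)''---rests on a false premise: the conditions of lemma~\eqref{conds} do \emph{not} guarantee that any component of size $\ge 3$ exists. For instance, $A=\{1,3,5\}$ with $n=5$, or $A=\{1,2,4,6,7\}$ with $n=7$, satisfy all four conditions and consist entirely of components of size at most two. Such all-small configurations are precisely the base case $\mathcal{R}_2$ of the paper's proof, which is handled there not by absorption into large components but by a direct counting of linearly independent rows of $M^{A'A}$, giving $K(A)\geq|\mathfrak{C}_2|-|\mathfrak{C}_1|-1+1_A(\{1\})+1_A(\{n\})>0$, followed by a descent argument (an observable that fails to be extensive would ultimately be local to a single site, contradicting corollary~\eqref{single site theorem}); the paper then inducts on the maximal component size, splitting $A=B_1\cup B_2\cup B_3$ with $B_3$ a large connected component and reassembling via the union property. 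Since your proposal defers the entire small-component analysis to an unexecuted ``finite but delicate case analysis,'' and the one concrete device you offer for it cannot work in general, the sufficiency half of the proposition---which is its actual content---remains unproven in your write-up, even though the framework you set up could in principle accommodate a corrected version of that analysis.
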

\begin{proof}

Define 
\begin{align}
\mathfrak{C}_k &= \{C \in \mathfrak{C}_A: |C| \leq k\}\\
\mathcal{R}_k &= \{A \in \mathcal{R}: \mathfrak{C}_A = \mathfrak{C}_k\}.
\end{align}
Intuitively, $\mathcal{R}_k$ is the set of all subsystems $A \in \mathcal{R}$ whose connected components contain at most $k$ sites. Note $\mathcal{R}_{n} = \mathcal{R}$. We'll use an inductive argument to prove that for each $\mathcal{R}_k$, every element contains extensively local observables. 

Consider first the base case where $A \subset \mathcal{R}_2$, where all connected components have cardinality at most 2. Note $|\mathfrak{C}_2|-|\mathfrak{C}_1|$ denotes the number of connected components in $\mathfrak{C}_A$ with cardinality exactly two. A simple argument by counting the number of linearly independent rows in $M^{A' A}$ results in the identity
\begin{align}
K(A) \geq |\mathfrak{C}_2|-|\mathfrak{C}_1| -1 + 1_A(\{1\}) + 1_A(\{n\}).
\end{align}
Thus, local observables exist for all $A \subset \mathcal{R}_2$.

The following proves by contradiction that the observables constructed above are extensively local. Suppose such an observable is not extensive, but is local to $S \subset A$ with cardinality at most $k$. This subset doesn't satisfy the criteria of lemma~\eqref{conds}, so the observable must be local to a subset $S$ with cardinality at most $k-1$. Repeating this argument inductively until $k = 1$ would imply the existence of a observable local to a single site, which contradicts corollary~\eqref{single site theorem}. 

To prove the inductive hypothesis, we'll demonstrate that for every $A \in \mathcal{R}_k$, with $k>2$, there exists a decomposition $A = \cup_i A_i$ such that either $A_i \in \mathcal{R}_{k-1}$, or $A_i$ is a union of connected components, $A_i = \cup C_{k,2} \subseteq [n]$ with $|C_{k,2}| \geq 3$. Since $A_i$ either is assumed to have extensively local observables in the first case, or known to have extensively local observables by lemma~\eqref{ConnectedRegions} in the latter case, $A$ must have extensively local observables.

Suppose $A$ has $l$ connected components $C \in \mathfrak{C}_A$ with cardinality $|C| = k$. We'll express $A$ as a union of the form $A = B_1 \cup B_2 \cup B_3$ such that $B_1, B_2 \in \mathcal{R}_k$, $B_3$ is connected with cardinality $|A_3|>2$, and $B_1$ and $B_2$ combined have $l-1$ connected components with cardinality $|C| = k$. An inductive argument on $l$ thus constructs the decomposition $A_i$ from the previous paragraph.

Pick one connected component $C \in \mathfrak{C}_A$ with cardinality $|C| = k$. The construction of $B_1, B_2, B_3$ splits into four cases:

If $A = C$, the construction $B_1 = B_2 = \emptyset, B_3 = C$ is trivial.

If $\min_{C' \in \mathfrak{C}_{A-C}} d_A(C, C')\geq 3$, then the sets $B_1 = A-C, B_2 = \emptyset$ must satisfy the axioms of lemma~\eqref{conds}, and $B_3 = C$ is of the desired form.

If there is a unique set $C_1$ such that $d(C, C_1) = 2$, set $B_3 = C$, $B_2 = \emptyset$. In the case where $\max C_1 < \min C$, set $B_1 = A_{<\min C + 2}$, else, set $B_1 = A_{>\max C - 2}$.

In the final case, there are two sets $C_1, C_2$ such that $d(C,C_1) = d(C,C_2) = 2$. Without loss of generality, assume $\max C_1 < \min C < \max C < \min C_2$. Set $B_3 = C$, $B_1 = A_{<\min C + 2}$, $B_2 = A_{>\max C - 2}$.
\end{proof}
\subsection{Symmetry Properties of Local Observables} \label{Symmetric Ham Locality}
Observe that in both toy models, if a subsystem has local observables, its parity dual also has local observables. This statement is also true for any theory with a $\mathcal{PT}-$symmetric metric, such as the common choice $\eta = \mathcal{PC}$ \cite{bender2002complex}. 

This is proven by explicit construction of an observable in the parity dual of a subsystem which is known to contain local observables. Explicitly, given a Hilbert space which factorizes $\mathcal{H} = \otimes_i \mathcal{H}_i$, a quasi-Hermitian observable, $O_A$, local to a collection of sites $A$, and supposing $\mathcal{PT}$ decomposes with the factorization of the Hilbert space, $\mathfrak{PT}:\text{End}(\mathcal{H}_i)\rightarrow \text{End}(\mathcal{H}_{\bar{i}}), \mathfrak{PT}(O) = \mathcal{PT} O \mathcal{PT}$, the following observable is local to $A_{\mathcal{PT}}= \{\bar{i}: i \in A\}$:
\begin{align}
O_{\bar{i}} &= \mathcal{PT} O_i \mathcal{PT} \\
\eta O_{\bar{i}} &= O^\dag_{\bar{i}} \eta.
\end{align}

\section{Outlook}

Due to the equivalence between a quasi-Hermitian theory in the Hilbert space with inner product $\braket{\cdot|\cdot}$ and a Hermitian theory in the Hilbert space with inner product $\braket{\cdot|\cdot}_\eta$, a natural question to ask is whether the generalized notion of locality discussed in this paper can be obtained without the use of a quasi-Hermitian description of a model. 

The discrepancy between the algebra of local operators and the algebra of physical observables stems from the distinction between the physical inner product, $\braket{\cdot|\cdot}_\eta$, and the inner product given by a tensor product structure. The author suggests that assuming these two inner products are the same is an unnecessarily restrictive assumption of quantum theory, and exploration of interesting nonlocal phenomena will follow from breaking this assumption. Quasi-Hermitian descriptions assist this process in cases where an understanding of the local degrees of freedom precedes an understanding of the dynamics. Reversing the roles of the Hamiltonian and locality suggests a procedure for starting with a physical inner product, and from there defining the tensor product structure and local degrees of freedom. In this sense, spacetime emerges from the fundamental degrees of freedom associated with $\braket{\cdot|\cdot}_\eta$. When the additional constraint that the Hamiltonian is local in the emergent degrees of freedom is applied, the choice of a tensor product structure is generically unique \cite{LocalFromSpec}.
In addition to the aforementioned application of our interpretation of quasi-Hermitian theory to the emergence of spacetime and generalizing local quantum theory, we mention several natural extrapolations of the results and strategy of this work below:

\begin{enumerate}
\item A generalization of \cref{quasilocal_theorem} relating to the existence criteria of extensively local observables. 
\item A discussion of local observables in non-Hermitian quantum field theories. Is it possible to find a theory with observables nonlocal in time with quasi-Hermiticity? This would bring this formalism one step closer to a bridge with quantum gravity.
\item A strategy for proving whether there exists a metric associated to a Hamiltonian which is compatible with local observables. Such a strategy may not apparent from the tensor product structure of $H$ alone, as $H_{\mathcal{PT}}$ is an example of a Hamiltonian with Schmidt numbers not identical with its metric. 
\item Since quasi-Hermitian theories often emerge through renormalization schemes \cite{LeeModel,LeeModelPT,quantumRG,blochFeschbach, Feshbach1962,Feshbach1958,Siegert1939}, it would be interesting to see how nonlocality emerges through an appropriate renormalization procedure. 
\item What can be said about the local observable algebras associated to a Hamiltonian which is not local in the sense of \cref{local Ham}, but is rather $k$-local for some $k>2$ \cite{LocalFromSpec}? 
\item A discussion of the local observable algebras in the case where the metric is time-dependent. In this case, for unitarity, the generator of time-evolution is no longer an observable, but satisfies
\begin{equation}
i \hbar \frac{d}{dt} \eta = H^\dag \eta - \eta H
\end{equation}
instead \cite{timeDependentInnerProd}.
\item 
The complete set of metrics associated to the first quantized $m=1$ Hamiltonian with uniform hopping amplitudes is known \cite{SSHMetric}, how do more general choices of metrics change the algebras of local observables?
\item An understanding of how \textit{entanglement} of the metric operator affects properties of local observable algebras, where an entangled metric operator is defined in the same fashion as an entangled mixed state \cite{werner1989quantum}. 
\item A generalization \cref{polynomial theorem} to metric operators compatible with models of fermions with pair creation and annihilation.
\end{enumerate}






\section*{Acknowledgements}

The author would like to thank Richard Cleve, Yogesh Joglekar, Yasha Neiman, Nic Shannon, Lee Smolin, and Neil Turok for insightful discussions.

This research was supported in part by the Perimeter Institute. Research at Perimeter Institute is supported by the Government of Canada through Industry Canada, and by the province of Ontario through the Ministry of Research and Innovation. This work was also funded by a visiting research position at the Okinawa Institute for Science and Technology.

\section*{References}

\bibliographystyle{iopart-num}
\bibliography{JPhysABib}{}
\end{document}